\newenvironment{wileykeywords}{\textsf{Keywords:}\hspace{\stretch{1}}}{\hspace{\stretch{1}}\rule{1ex}{1ex}}
\newcommand*\samethanks[1][\value{footnote}]{\footnotemark[#1]}
\newtheorem*{Proposition}{Proposition}
\definecolor{background-color}{gray}{0.98}
\title{\textit{In-situ} Data Analysis of Protein Folding Trajectories}
\author{Travis Johnston\thanks{Department of Computer and Information Sciences, University of Delaware}, 
             Boyu Zhang\samethanks[1],
			 Adam Liwo\thanks{Department of Chemistry, University of Gdansk, Poland},
	   Silvia Crivelli\thanks{Lawrence Berkeley National Laboratory},
	    Michela Taufer\samethanks[1]}
\begin{document}

\maketitle

\begin{abstract}
The transition from petascale to exascale computers is characterized
by substantial changes in the computer architectures and
technologies. The research community relying on computational
simulations is being forced to revisit the algorithms for data
generation and analysis due to various concerns, such as higher
degrees of concurrency, deeper memory hierarchies, substantial I/O and
communication constraints. Simulations today typically save all data
to analyze later. Simulations at the exascale will require us to
analyze data as it is generated and save only what is really needed
for analysis, which must be performed predominately \textit{in-situ},
i.e., executed sufficiently fast locally, limiting memory and disk
usage, and avoiding the need to move large data across nodes.

In this paper, we present a distributed method that enables 
\textit{in-situ} data analysis for large protein folding trajectory
datasets. Traditional trajectory analysis methods currently follow a
centralized approach that moves the trajectory datasets to a
centralized node and processes the data only after simulations have
been completed. Our method, on the other hand, captures conformational
information \textit{in-situ} using local data only while reducing the
storage space needed for the part of the trajectory under
consideration. This method processes the input trajectory data in one
pass, breaks from the centralized approach of traditional analysis,
avoids the movement of trajectory data, and still builds the global
knowledge on the formation of individual $\alpha$-helices or
$\beta$-strands as trajectory frames are generated.
\end{abstract}

\begin{wileykeywords}
exascale computing, I/O reduction, ensemble protein folding, eigenvalues
\end{wileykeywords}

\clearpage


\begin{figure}[h]
\centering
\colorbox{background-color}{
\fbox{
\begin{minipage}{.5\textwidth}
\includegraphics[width=80mm,height=40mm]{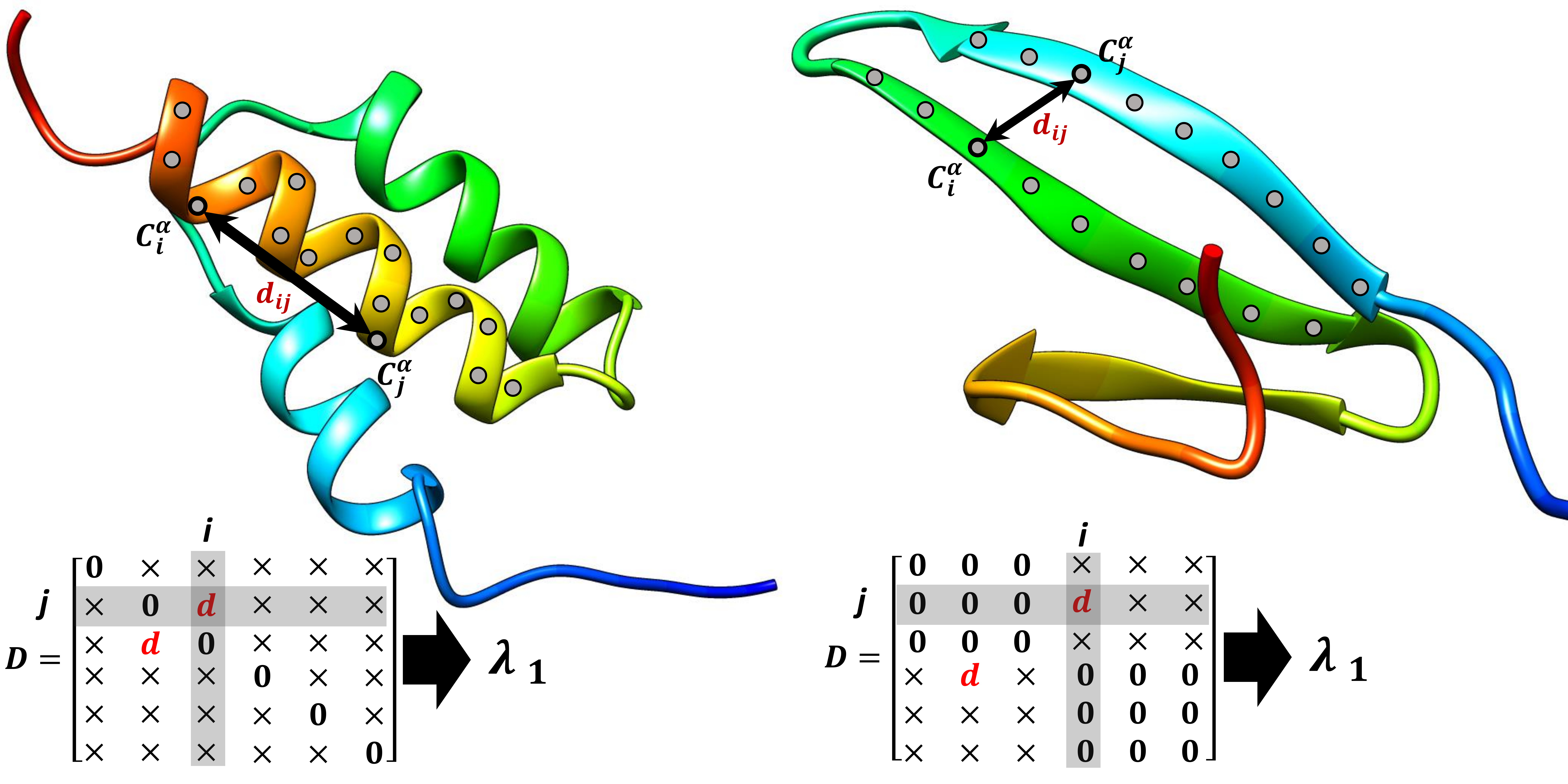}
\end{minipage} \hfill
\begin{minipage}{.5\textwidth}
As computing moves towards exascale, I/O bandwidth limitations and power concerns
will require a fundamental change in the way data is analyzed and stored.
We propose a novel method of \textit{in-situ} data analysis of protein folding trajectories.
The analysis runs in parallel to the simulation and can dramatically reduce the amount
of data sent across a network and written to disk.  
We empirically demonstrate the effectiveness of our approach by investigating
trajectories of two proteins: 1BDD (left) and 1E0L (right).
\end{minipage}
}}
\end{figure}


\makeatletter
\renewcommand\@biblabel[1]{#1.}
\makeatother

\bibliographystyle{apsrev}
\renewcommand{\baselinestretch}{1.5}
\normalsize

\clearpage

\section*{\sffamily \Large INTRODUCTION} 

Fundamental changes in computer architecture will accompany the
transition from petascale to exascale computing.
Aspects like significantly higher concurrency, deeper memory
hierarchies, substantial I/O and communication bottlenecks, and power
constraints are forcing the community to revisit traditional algorithms.
The higher degree of concurrency speeds the generation of
simulation data but I/O and communication bottlenecks, as well as power
constraints, severely limit data storage and movement.
While simulations today save all the data to be analyzed later,
simulations at the exascale will require us to analyze data as it is
generated and save only what is really needed.  New techniques are
needed for the data analysis.  When integrated in simulations, the
analysis should be performed \textit{in-situ},i.e. execute sufficiently
fast locally, use a small amount of memory and disk, and avoid
substantial or frequent data movement~\cite{bennett_2012}.
In this paper we address the challenge of defining an \textit{in-situ}
data analysis for specific datasets: ensembles of trajectories
in protein folding simulations. In past work we addressed a similar
challenge for protein ligand docking 
simulations~\cite{estrada2012, boyu_2013, boyu_2015}.  This work is an expansion and
refinement of work on smaller proteins~\cite{boyu_2014};
in particular, the method we present here is suitable for much larger
proteins than in the previous work and provides a finer-grained,
multi-faceted view of the protein conformations as they evolve through
a simulation.

In protein folding simulations, a protein's string of amino acids
connected by peptide bonds folds into a compact shape, called tertiary
or native structure, which determines how the protein functions. An
intermediate level between primary and tertiary structure is called
secondary structure and it is composed of $\alpha$-helices,
$\beta$-strands, and the turns and loops that connect
them. $\alpha$-helices and $\beta$-strands involve well-understood local
interactions via hydrogen bonding between neighboring amino
acids. $\beta$-strands align and bond with other strands forming
$\beta$-sheets. Molecular dynamics (MD) simulations aim to understand
the properties of these dynamic systems at the atomic level by
providing detailed information on the fluctuations and conformational
changes as they fold.

The increase of computing power and the high degree of parallelism in
the folding simulations enable unprecedented fine-grained time scales,
and allow an increasingly large number of trajectories to be computed
in parallel. The analysis of these trajectories, on the other hand,
is highly centralized, requiring major data movements that will no
longer be sustainable on exascale machines. To reach any conclusion,
traditional analyses initially move all the frame data to a parallel file
system (e.g., Lustre or GPFS). Data are eventually moved to the user's
machine or a cluster dedicated to the analysis, requiring a second
massive movement of data (see Figure~\ref{fig:fig1overview}).

The method we propose in this paper enables an \textit{in-situ} analysis
for large distributed trajectory datasets and removes the need for moving
large amounts of data, thus better fitting with the profile of
exascale machines.  As shown in Figure~\ref{fig:fig2overview}, our
method processes each frame locally and in isolation, transforms each
protein conformation into metadata that is sensitive to conformational
changes, and avoids the movement of trajectory data.
This is a fundamental break from the centralized approach of
traditional data analysis that still builds a global knowledge of
the folding trajectory.

Specifically, as soon as a frame is generated, the frame is written to
local storage and the metadata is generated and stored locally for the
entire length of the simulation.  Stable states are identified by regions
of the trajectory where the metadata changes very little
(i.e., they fluctuate around an average value).
During these periods, one or several representative frames can be
selected and moved from local storage to the parallel file system.
The other frames that are similar can be removed from local storage
without being written (moved) to the parallel file system.
As new frames are generated, each frame's metadata is computed and
compared against the stored metadata.  When new stable regions are
detected, the process repeats: representative frames are
chosen and moved from local storage to the persistent parallel file
system. Transition states of the trajectory can be also identified as
the metadata drifts away from average values; if these transition
states are of interest to the researcher they can be moved to the 
parallel file system. By doing this, our method limits the amount
of data stored on the local hard drive while maintaining the ability to
compare new frames to old, since all metadata on local storage are
preserved. As we will discuss in this paper, the map to metadata and
comparison of metadata is very inexpensive computationally. Thus, we
have a true \textit{in-situ} data analysis which accomplishes our
primary goal.

\section*{\sffamily \Large METHODOLOGY}

Our goal is to identify how trajectories, in ensemble folding
simulations, evolve without the need for writing the entire
trajectory to disk.
We accomplish our goal by independently mapping each conformation
to compact metadata. 
In order to be useful, the map to metadata must preserve
conformational \textit{closeness.}  In other words, similar protein
conformations must map to similar metadata.  The metadata can be
generated and analyzed \textit{in-situ}.  The result of the analysis
determines which frames are ultimately written to disk for access
after the completion of the simulation.

\subsection*{\sffamily \large Mapping a conformation to metadata}

The map to metadata consists of three steps and is applied
independently to each frame of the trajectory.

First, we discretize the protein by selecting (up to) two
representative atoms per amino acid: the $\alpha$-carbon, which is the
carbon in the amino acid backbone, and the $\beta$-carbon, which is the
first carbon atom in the side chain.
The amino acid Glycine is exceptional in that it contains no
$\beta$-carbon; in this case, we
select only one representative for that amino acid. 
The PDB files we used employed a coarse-grained representation;
the CB entry in these PDB files gives the position of the side
chain center. Throughout this paper we will refer to these entries
as $\beta$-carbons, but the reader should keep in mind that these are
actually side chain centers.  We expect that in practice this choice
makes little difference.
We extract the position of each representative carbon as $\vec{r}_1,
\vec{r}_2,...,\vec{r}_{n}$ where $\vec{r}_{i}=(x_i, y_i, z_i)$ is the
Cartesian coordinate of the $i^{\text{th}}$ representative carbon.
Throughout the remainder of this section, we refer to $n$ as the
number of carbons selected in the discretization process.  Note that
$n$ is approximately twice the number of amino acids and is
significantly smaller than the total number of atoms in the entire
protein.

Second, using the selected carbons' atomic coordinates, we create an
$n\times n$ distance matrix, $D=[d_{ij}]$, where $d_{ij}$
records the square of the Euclidean distance between two carbon atoms,
i.e. $d_{ij}=\Vert \vec{r}_i-\vec{r}_j\Vert_{2}^{2}
=(x_i-x_j)^2+(y_i-y_j)^2+(z_i-z_j)^2$. 
The matrix $D$ is called a Euclidean distance matrix, or EDM.

Third, we compute the eigenvalues of the matrix $D$. The matrix $D$
has three important properties that affect the eigenvalues.  $D$ is a
symmetric, real matrix with non-negative entries and zeros along the
diagonal.  Because $D$ is symmetric all of its eigenvalues are real.
Since $D$ is real and symmetric it is diagonalizable; this implies
that $D$ is similar to a diagonal matrix $E$ containing the
eigenvalues of $D$.  The trace of $D$ is clearly zero and the trace is
similarity invariant; hence, the sum of the eigenvalues of $D$ is
zero.  Finally, and most notably, because of the choice of distance
measure, if $n\geq 5$ and all of the carbon atoms do not lie on a
common plane, then $D$ has exactly $5$ non-zero eigenvalues.  We use
non-zero eigenvalues of $D$ to create a metadata packet associated with
each frame.

While we could use all 5 non-zero eigenvalues, we focus exclusively on
the largest eigenvalue.  The reason for this is two fold.
First, there is natural dependence among the eigenvalues.
To see this, suppose that the non-zero eigenvalues of $D$ are
$\lambda_5\leq \lambda_4\leq ...\leq \lambda_{1}$.
Since the sum of the eigenvalues is $0$ we can write:
$\lambda_{5}+\lambda_{4}+\lambda_{3}+\lambda_{2}=-\lambda_{1}$.
So, knowing 4 of the 5 eigenvalues is enough to determine the fifth.
Second, our empirical observations show that the three non-zero eigenvalues
with smallest magnitude are orders of magnitude smaller than the
largest eigenvalue.  We observe that the $5$ non-zero eigenvalues satisfy:
$\lambda_{5}<\lambda_{4}<\lambda_{3}<\lambda_{2}<0 <\lambda_{1}$,
and $|\lambda_{i}|<<\lambda_{1}$ for $i=2, 3, 4$.
This means that $-\lambda_{5}\approx \lambda_{1}$.
The result is that changes in $\lambda_{i}$ for $i=2,3,4$ are nearly
imperceptible and $\lambda_{5}$ is highly correlated with $\lambda_{1}$.

We compute the eigenvalues of $D$ by using standard libraries. Our
code is written in Python and uses the linear algebra libraries
available in scipy (SCIentific PYthon). We use the function {\tt
scipy.linalg.eigh} which takes advantage of the symmetry of the
matrix.

The method described above generates a piece of metadata representing
the entire protein. In practice, it behaves as a very coarse view of
the protein.  In order to get finer grained detail and capture the
formation of an individual $\alpha$-helix or $\beta$-strand, we modify
the method to \textit{zoom in} on a region of interest.  Suppose we
want to track the formation of an $\alpha$-helix (or $\beta$-strand)
and we know that carbons $i$ through $j$ fold into an $\alpha$-helix
(or $\beta$-strand). To focus on the individual structure, we consider
an EDM, $D_{1}$, which is a submatrix of $D$ corresponding to rows and
columns $i$ through $j$. One can equivalently view $D_{1}$ as the EDM
generated when one only considers the representative carbons that form
the substructure.  The matrix $D_{1}$ has the same properties of the
matrix $D$ (i.e., symmetry, non-negativity) but has fewer rows and
columns. Provided that $j-i+1\geq 5$ and not all the coordinates
represented in $D_{1}$ are on a common plane, then $D_{1}$ also has
exactly $5$ non-zero eigenvalues. These eigenvalues share the same
properties as the eigenvalues of $D$; hence we keep only the largest
eigenvalue of $D_{1}$.

To study the relative shape and position of two substructures (e.g.,
two $\alpha$-helices or $\beta$-strands), we build a different matrix
that is derived from $D$, but is no longer an EDM.  Suppose that we
have two $\beta$-strands which are forming and we want to track how
they position themselves relative to each other while forming a
$\beta$-sheet. The first $\beta$-strand includes carbons
$a_1,a_2,...,a_k$ and the second $\beta$-strand includes carbons
$b_1,b_2,...,b_\ell$.  We begin by constructing a (non-square) matrix
$C:=[c_{ij}]$ where
\[c_{ij}:=
(x_{a_i}-x_{b_j})^2 + (y_{a_i}-y_{b_j})^2 + (z_{a_i}-z_{b_j})^2.\]
Then, the inter-structure \textit{distance} matrix we create is the
following:
\[D_{2}:=\left[\text{\begin{tabular}{c|c} $0$ & $C$ \\
\hline $C^{T}$ & $0$ \end{tabular}}\right].\]
The matrix $D_{2}$ is a submatrix of $D$ with rows and columns
$a_1,...,a_{k}, b_1, ..., b_{\ell}$, and with entries $d_{a_i a_j}$
and $d_{b_i b_j}$ zeroed out. The entries $d_{a_i a_j}$ are zeroed out
to remove as much variation as possible from within a single structure.

The matrix $D_2$ is not a EDM because there are zeros off the
diagonal.  However, it is has all the essential properties of an EDM.
$D_{2}$ is real and symmetric which ensures that we have all real
eigenvalues; it is also diagonalizable and $\text{tr}(D_{2})=0$.
Furthermore, because of the block structure of $D_2$ whenever
$\lambda$ is an eigenvalue of $D_2$ so is $-\lambda$.  With this
observation in mind we only consider the positive eigenvalues of
$D_2$.  Additionally, one can prove that $D_{2}$ has at most $5$
positive eigenvalues; for the interested reader, the proof is contained
in the appendix. Again, we select only the largest eigenvalue because
the smaller eigenvalues tend to be much, much smaller.

Throughout the remainder of this paper we use the following naming
convention when referring to matrices. $D$ always refers to an EDM
constructed using all the $\alpha$- and $\beta$-carbons from the entire
protein.  $D_{i}$ always refers to an EDM constructed using all the
$\alpha$- and $\beta$-carbons from the $i^{\text{th}}$ substructure
($\alpha$-helix, or $\beta$-strand). $D_{ij}$ always refers to the
block matrix we last discussed which compares the $i^{\text{th}}$
substructure's relative position to the $j^{\text{th}}$ substructure.

\subsection*{\sffamily \large Eigenvalues preserve conformational
closeness}

In the outset of this section, we asserted any choice of a map to
metadata must preserve computational closeness in order to be useful.
To be useful, we must be certain that if conformations are similar
their metadata will also be similar; we also need to be certain that
when we see disparate metadata that the originating conformations
are also quite disparate. The eigenvalues of the matrices
$D$, $D_{i}$, and $D_{ij}$ give us this guarantee.
This is a result of the stability of eigenvalues~\cite{tao}.
The eigenvalues of a matrix are said to be stable if small perturbations
in the matrix result in only small perturbations in the eigenvalues.
Not all matrices have stable eigenvalues; however, real, symmetric
matrices do have stable eigenvalues and we always consider real
symmetric matrices.  The change in conformation from one frame to
another can be viewed as a perturbation.  The second frame can be
viewed as the first frame with some changes (perturbations) applied.
If the conformations are very similar, then the perturbation is very
small; as a consequence, the eigenvalues must, by stability, be close
together.  If the conformations are very different, then the 
perturbation will be very large.  When the perturbation is large the
change difference in eigenvalues can also be large.
In principle, there is no guarantee that the difference in these
eigenvalues will be large; however, in our experiments we did not
observe cases where the conformations were quite different and the
eigenvalues were quite similar.

\clearpage

\section*{\sffamily \Large RESULTS}

In this section we present empirical results that demonstrate the
accuracy and efficiency of our method for \textit{in-situ} data analysis
of trajectories.

\subsection*{\sffamily\large Datasets}

We consider two proteins: 1BDD and 1E0L. The native structures of these
proteins were experimentally determined by NMR spectroscopy.
The protein 1BDD consists of 60 amino acids and it is composed of
a bundle of three $\alpha$-helices as shown in Figure~\ref{fig:1bdd}.
Helix~2 (Glu25-Asp37) and Helix~3 (Ser42-Ala55) are antiparallel
to each other, and Helix~1 (Gln10-His19) is tilted with respect to the
other two~\cite{1bdd}. The protein 1E0L consists of 35 amino acids
and it presents a triple-stranded antiparallel $\beta$-sheet 
topology~\cite{1e0l} as shown in Figure~\ref{fig:1e0l}.

Molecular dynamics simulations for 1BDD and 1E0L were carried out with
the coarse-grained UNRES force
field~\cite{liwo_2008,liwo_2011,baranowski_2015} using the
parameterization for $\alpha$-helical~\cite{liwo_2007} and
$\beta$-sheet proteins~\cite{liwo_2008_02}, respectively. The
simulations were run at $T=300$~K. About 100,000,000 MD steps were run
at the time step of 4.89 fs. Frames were collected every 5,000 steps.
Langevin dynamics were applied scaling the water friction coefficient by
a factor of 100 as in our earlier work~\cite{liwo_2005,khalili_2005_02}.

\subsection*{\sffamily \large Formation of individual structures}

We first assess the ability of our method to identify the formation
of individual secondary structures (i.e., a single $\alpha$-helix or
a single $\beta$-strand). We compute the eigenvalues for the region
of the protein that eventually forms the secondary structure being
examined; these eigenvalues are associated to distance
matrices of type $D_{i}$ referenced in the methodology section.

We begin our study by examining a folding trajectory of the protein
1BDD which contains a bundle of 3 $\alpha$-helices.
The regions which fold into helices span 10, 13, and 14 amino acids
respectively. The total number of $\alpha$- and $\beta$-carbons in each
$\alpha$-helix is 20, 25, and 28 respectively.
Thus, in the process of computing the eigenvalues for each folding
helix, we consider three distance matrices per frame.
Each matrix is relatively small: $20\times 20$, $25\times 25$,
and $28\times 28$ respectively.  
Note that each matrix easily fits in memory. 
We compute the largest eigenvalue of each of these matrices and
store it as metadata.

Figure~\ref{helix2_formation} shows to the formation of Helix~2 in
the protein 1BDD. Figure~\ref{helix2_formation}.a (top) shows the
evolution of the largest eigenvalue for this helix over the first 100
frames of the trajectory. Each point in the plot corresponds to the
largest eigenvalue of the distance matrix $D_{2}$ for a given frame.
Figure~\ref{helix2_formation}.b (middle) shows the root-mean-squared
deviation (RMSD) of the structure as compared to the corresponding
native structure which had been determined using NMR spectroscopy.
Looking at the largest eigenvalue sequence we see three stages.
The first stage (blue) has the largest values and the most variability.
The second stage (red) has much less variability.
The final stage (green) is made up of smaller values and has less
variability than the previous stages.  
We selected two frames from each region and visualized the
corresponding conformation using Chimera~\cite{chimera} in
Figure~\ref{helix2_formation}.c (bottom).  
The highlighted (cyan) portion of the protein corresponds to the
region under consideration that eventually becomes Helix~2.
We observe that in the first stage, the helix is completely unfolded.
Throughout the second stage, the helix has begun to form and contains
a single helical turn. By the third stage, the helix is much better
formed and consists of several helical turns.

The construction of Helices 1 and 3 follows a similar behavior but,
because one of their ends is loose, the substructures exhibit more
variability and less stability. 
Figure~\ref{helix3_formation} presents results related to the
protein region that folds into Helix~3; the structure of the figure
is similar to Figure~\ref{helix2_formation}:
Figure~\ref{helix3_formation}.a (top) and 
Figure~\ref{helix3_formation}.b (middle) shows the largest eigenvalue
and RMSD patterns and Figure~\ref{helix3_formation}.c (bottom) shows
representative snapshots. The third helix forms very rapidly.
We observed this formation in both the eigenvalue and RMSD patterns.
We also see only two stages: the
first stage before frame 67 and the second after frame 67. The sudden
change in the eigenvalue pattern coincides with the sudden formation
of the helix. In Figure~\ref{helix3_formation}.b we visualize three
frames before and three frames after frame 67.

We observed that the trajectory of eigenvalues has a remarkably
similar trajectory to the RMSD (comparing the shapes in
Figure~\ref{helix2_formation}.a to Figure~\ref{helix2_formation}.b
and comparing Figure~\ref{helix3_formation}.a 
to Figure~\ref{helix3_formation}.b).
The important thing to note is that, in practice,
RMSD values are measured against a known reference structure that is
not always available. The computation of the eigenvalues, on the other
hand, is performed on a single frame in isolation (i.e., without
looking at the previous or next frames).

The all-$\beta$ protein 1E0L is composed of three $\beta$-strands. The
regions which fold into strands span 6, 7, and 5 amino acids
respectively.  The total number of $\alpha$- and $\beta$-carbon atoms
considered in each strand is 12, 14, and 10 respectively. When we
consider the distance submatrix used for the computation of
eigenvalues, their sizes are $12\times 12$, $14\times 14$, and
$10\times 10$. Figure~\ref{sheet1_validation},
Figure~\ref{sheet2_validation}, and Figure~\ref{sheet3_validation}
refer to the first, second, and third $\beta$-strands respectively.
Figure~\ref{sheet1_validation}.a (top) shows the pattern of
the largest eigenvalues obtained from the distance matrices composed 
of the $\alpha$- and $\beta$-carbon atoms of the first $\beta$-strand--one
eigenvalue for each frame of the trajectory.
We cluster the pattern in regions; 
each region's values appeared to be stable. 
We hypothesize that these states are associated with meta-stable strands
in the trajectory.
From each cluster, three representatives are chosen by selecting frames
whose $\beta$-strand's eigenvalues are nearest the mean value of the
eigenvalues in the region.
The six sets of frames are visualized in Figure~\ref{sheet1_validation}.b
(bottom) using Chimera;
frames that belong to the same cluster are grouped and boxed using the
same color from the eigenvalue plot in 
Figure~\ref{sheet1_validation}.a (top).
In Figure~\ref{sheet1_validation} we see that within a cluster the
highlighted region of the protein are mostly similar.
The fourth cluster (blue) is a bit of an exception and is not as
internally similar as the rest.
Also, we notice that the second and fifth clusters (green and purple
respectively) have both similar eigenvalues and a similar appearance.

Figure~\ref{sheet2_validation} and Figure~\ref{sheet3_validation} are
structured similarly as Figure~\ref{sheet1_validation} but refer to
the second and third $\beta$-strands respectively. The top of each figure
presents the clustering of the largest eigenvalues for the considered
$\beta$-strand; on the bottom we present the structures of three frames
for each cluster that are selected because the $\beta$-strand's eigenvalues
are nearest the mean value of the eigenvalues in the associated cluster.
Following the pattern for visualizing the first $\beta$-strand,
we identify six clusters for the second strand and four clusters for
the third strand. 
Again, we observe similarities among structures that are in the same
cluster and have similar eigenvalues.

The visual analysis of similarities previously remarked in
Figure~\ref{sheet1_validation}, Figure~\ref{sheet2_validation},
and Figure~\ref{sheet3_validation} is purely qualitative.
In order to move from a qualitative and visual approach to a 
quantitative method to assess the similarities,
we construct the heat maps for the three $\beta$-strands.
The heat map is a visual representation of the RMSD of pairs of
representative frames.
Figure~\ref{sheet1_heatmap} shows the heat map for the first strand,
Figure~\ref{sheet2_heatmap} for the second, and
Figure~\ref{sheet3_heatmap} for the third.  In these figures we
compute the RMSD from the $i^{\text{th}}$ representative $\beta$-strand
to the $j^{\text{th}}$ representative $\beta$-strand.
The RMSD value is stored in the $ij$ entry of a matrix.
The matrix is visualized as a heat map where lighter (whiter) areas
correspond to lower RMSD values (i.e., white being an RMSD of 0 Angstroms)
and darker (redder) areas correspond to higher RMSD values.
The heat map is blocked into regions by cluster.

Based on the conformation visualization in
Figure~\ref{sheet1_validation} we expect the $3\times 3$ block
diagonal portion of Figure~\ref{sheet1_heatmap} to be very lightly
colored. We indeed observe this phenomena.  The lightly colored block
diagonal implies that conformations within a cluster have small RMSD and
are therefore conformationally similar.
Additionally, we confirm a visual observation about the second and
fifth clusters being similar.  This observation is reflected in the
heat map in which we note that the blocks corresponding to
Cluster 2 and Cluster 5 are nearly as light as the block diagonal.
The other important feature of this heat
map is that the $3\times 3$ blocks on the super-diagonal are quite
dark colored. This indicates that when the eigenvalues significantly
change resulting in a new cluster, the associated $\beta$-strand also
significantly changes. Some of the most profound changes are observed
between Clusters 1 and 2, and Clusters 3 and 4.

Figure~\ref{sheet2_heatmap} corresponds to the selected
representatives of the second $\beta$-strand in
Figure~\ref{sheet2_validation}. For this strand, we observe many of
the same features in the heat map as the visual analysis. Once again,
the block diagonal is very light, as expected. Furthermore, we see
from Figure~\ref{sheet2_validation} that the third and fourth clusters
are visually similar as are the second and fifth. These observations
are confirmed with the heat map.

Figure~\ref{sheet3_heatmap} is the heat map corresponding to the
representatives of the third $\beta$-strand in Figure~\ref{sheet3_validation}.
We note one final observation: the eigenvalues from each strand settle
into a final state at approximately the same time near frame
32000 in the simulation.  This is likely a result of the interdependence
of the strands. A single strand, by itself, is not very rigid; 
however, when the two strands are held together by hydrogen bonds,
they are able to form a more rigid structure.

\subsection*{\sffamily \large Relative position of two substructures}
	
In this section we assess the capability of our method to capture the
relative position of a protein's secondary structures with respect to
each other over the trajectory. Our empirical study considers once
again an entire trajectory of the 1BDD protein with its three
$\alpha$-helices.  
As a first view, we consider the largest eigenvalue of the distance
matrix, $D$, formed by considering all the $\alpha$- and $\beta$-carbons.
Then, to refine our view of the protein we consider the largest
eigenvalues from the distance matrices for the individual helices
(i.e., Helix~1, Helix~2, and Helix~3); these are the matrices of type
$D_{i}$ referenced in the methodology section.
We also consider the largest eigenvalues from the matrices which
compare distances between two helices at a time
(i.e., Helix~1 with Helix~2, Helix~1 with Helix~3, and Helix~2 with
Helix~3); these are the matrices of type $D_{ij}$ referenced in the
methodology section.

Examining the pattern of the largest eigenvalues for the entire
protein in Figure~\ref{ev_helix3_moving}.a (top) we observe a spike
between Frames 1300 and 1400.  When the spike occurs, the eigenvalue's
magnitude is tripled. If we only considered this coarse grained view,
all we can conclude is that during those frames, a structural change
in the protein conformation happens but we are unable to pinpoint
the precise cause.

To investigate the cause of the spike, we examined the three
eigenvalue patterns associated to the distance matrices corresponding
to an individual helix substructure. If there were a corresponding change
in the eigenvalues, we could, for example, surmise that one or more of
the helices may have unfolded and refolded again.
Figure~\ref{ev_helix3_moving}.b (middle) contains plots of these
eigenvalues over the region of interest.
We note that there are no significant changes in any of the eigenvalues
over these ranges; thus we are led to conclude that the helix structure 
remained relatively unchanged over this portion of the trajectory.
Finally, we consider the eigenvalues of the inter-structure distance
matrices.  When the relative positions of two rigid structures is
constant there should be very little change in these matrices and hence,
very little change in the corresponding eigenvalues.
However, if two structures are moving either apart or together,
we expect to see some significant changes in the associated distance
matrix which manifests itself as significant changes in the eigenvalues.
Figure~\ref{ev_helix3_moving}.c (bottom) shows the largest eigenvalue
of these inter-structure distance matrices.
First, we notice that the relative distance between Helix~1 and Helix~2
appears to change very little.
On the other hand, we quickly observe that there is a
prominent spike in the eigenvalues of the other two matrices comparing
Helix~1 to Helix~3 and comparing Helix~2 to Helix~3.
The common structure represented by those two matrices is Helix~3.
We hypothesize that the explanation for the spike observed in
the coarse view of the entire protein is a result of the third helix
moving drastically with respect to the first and second helix.  This
hypothesis is confirmed when we view the conformations from the
trajectory. Figure~\ref{helix3_moving} shows consecutive frames
between frames 1300 and 1400; Helix~3 (orange) swings away
from Helix~1 and Helix~2 (cyan and magenta respectively) and
then returns to its initial position. Note that the colors of the
helices in this figure correspond to the coloring of the eigenvalues
in Figure~\ref{ev_helix3_moving}.b (middle).
Over the course of this transformation, the changes in the relative
position of the first two helices are small.

From these two case studies, of a 1BDD folding trajectory and a 
1E0L folding trajectory, we observed two major strengths of our method.
First, we were able to positively identify stable stages
of both trajectories.  We observed that during periods where the
eigenvalues were stable the conformation was also stable.  This
bolsters our confidence in the ability to use distance between eigenvalues
as an accurate proxy for the distance between conformations (typically
measured with RMSD or similar metric).  The most striking observation
was the similarity in shape between RMSD and the largest eigenvalues
that we saw in Figures~\ref{helix2_formation}.a-b and 
Figures~\ref{helix3_formation}.a-b.
Second, we were able to leverage our ability to \textit{zoom-in}
on segments of the protein, and to compare two segments, to better
understand the clues we saw at a coarse scale.  The ability to
view the protein from multiple angles proved useful to analyze the
cause of the spike we saw in Figure~\ref{ev_helix3_moving}.

\clearpage
\section*{\sffamily \Large DISCUSSION}

In this section we verify that our analysis is indeed an
\textit{in-situ} data analysis.  We discuss the novelty
of our approach and compare it with popular existing approaches.
We conclude the section with a discussion of the challenges and
opportunities related to integrating our method into existing
tools for simulation and analysis.

\subsection*{\sffamily \large The cost of our \textit{in-situ} 
data analysis}
	
In order to be considered an \textit{in-situ} data analysis,
the CPU and memory footprint must be light enough to not interfere
with the ongoing simulation and communication must be minimal.
We note that our method does not communicate any information across
the network, so its communication footprint is nonexistent.

Beginning with the trajectory for 1E0L (39130 frames, 35 amino acids)
we construct 7 matrices for each frame (i.e., one matrix for the entire
protein, one matrix for each of the three $\beta$-strands, and one matrix
for each of the pairs of $\beta$-strands).
Using a single CPU core running at 2.66 Ghz, it took 768.059s to
compute all the eigenvalues;
this implies an average CPU time of 0.0196s per frame.
During a typical simulation, frames are generated using many cores of
the node at a rate of about 6 seconds per frame;
the 0.0196s on a single CPU is a negligible fraction of this time.
The process of computing the eigenvalues required approximately 100 KB
of memory.
Finally, storing in memory the largest eigenvalue from each of the
7 matrices for the entire length of the trajectory requires about
2.2 MB memory.

The footprint is similarly light for the trajectory of 1BDD 
(41896 frames, 60 amino acids).
The construction of the 7 matrices for the entire trajectory required
1417.651s implying an average CPU time of .0338s per frame.
The matrices are somewhat larger and therefore require a little
additional memory. The computation used approximately 200 KB of memory. 
Storing the largest eigenvalue (for 1BDD or 1E0L) requires the same
amount of memory per frame, $7\times 8=56$ Bytes per frame, or about
2.35 MB for the entire trajectory.  

We see that, both memory and CPU utilization are minimal with the
proteins we considered.  Thus, our analysis is an \textit{in-situ}
data analysis.  Proteins with more substructures of interest require
storing more eigenvalues.  For example, if the protein has 10
substructures, then there are up to $56$ eigenvalues per frame (i.e.,
1 eigenvalue for the matrix $D$, 10 matrices of the form $D_{i}$
contribute 1 eigenvalue each, and 45 matrices of the form $D_{ij}$),
or a total of about 18 MB.  This is still an insignificant amount of
memory to be used.  A further memory reduction
can be achieved by storing the eigenvalues as single precision
floating point numbers instead of double precision; this cuts the
memory requirement in half with negligible loss of accuracy.

\subsection*{\sffamily \large The novelty of our approach}

There are a number of methods that have been used to analyze trajectory
data.  One method, that works in a parallel distributed fashion is a
framework called HiMach~\cite{tu}.  This framework is a MapReduce style
interface that takes advantage of naturally parallel analysis operations
to understand statistical data of long trajectories.
Our work differs in that we focus on the similarity or discrepancy in
the geometric structure of the conformation and not on any statistical
information about the frame.

More sophisticated trajectory analyses, like those of 
Best et al.~\cite{best} and Phillips et al.~\cite{phillips} are
traditionally centralized, and are therefore limited by the length
of the trajectory and the size of the protein.  These traditional
methods also make comparisons to an energy minimal structure known
ahead of time.  Their works focus on constructing a
frame-by-frame dissimilarity matrix and making reductions to lower
dimensionality.
Our work differs in that our analysis is not centralized and requires
no data movement; our analysis is accomplished as the simulation runs
(not post simulation); we require no \textit{a priori} knowledge of
an energy minimal structure; and, we are able to consider much larger
proteins because we focus on smaller substructures.

The idea of mapping to metadata and analyzing the metadata is not new.
Our previous work~\cite{boyu_2015}, which explored protein-ligand docking,
mapped ligand geometries to metadata using a sequence of projections
and regression.  In that work, the orientation of the ligand
(both translation and rotation) is important; the analysis of metadata
is done after the simulation and on the entire set of metadata.
Our present work is different because we analyze the metadata
during the simulation.  In addition, our choice of metadata differs
because the orientation of the protein (translation and rotation) is
irrelevant; we need (and construct) metadata that is rotation and
translation insensitive.

We have previously developed a method to analyze protein folding
trajectories~\cite{boyu_2014}.  Our earlier work involved map to metadata
that is very similar to the work in this paper.
The primary difference is that previously, we computed eigenvalues of
a different matrix (not a Euclidean distance matrix) and we only 
considered a \textit{coarse} view of the protein (building the matrix
from all $\alpha$- and $\beta$-carbons).
We were able to show that the method could identify stable states,
but only in relatively simple proteins.  And, without looking at
matrices built from substructures, we were unable to get a
fine-grained view of what was happening in the trajectory--we
were only able to detect meta-stable and transition states.
Our new approach gives us a much more refined view of the protein
allowing us to extract more information about the conformational 
evolution.

\subsection*{\sffamily \large Opportunities to adopt and expand our method}

We demonstrated that our method effectively captures conformational
changes in proteins and that the metadata generated can be used
to deduce what is happening in the trajectory without the need
for moving the trajectory data and visualizing it.  

In order to fully realize the potential of our method,
it must be used as the simulation is running and the ensemble
of trajectories across the compute nodes evolve.
We envision the incorporation of our method into an existing framework
for \textit{in-situ} data analysis such as DataSpaces~\cite{ds}
and the tools for analysis being incorporated into well-known
toolkits such as MMTSB~\cite{mmtsb}.

The analyses of eigenvalue patterns presented in this paper are
performed manually.  The scalability of our approach can be assured by
extending our method and integrating automatic clustering algorithms
that the classify the trajectory state into either stable or
transitional states based on the eigenvalue pattern.  
There are several popular methods for clustering data including
centroid-based clustering (e.g. fuzzy $c$-means),
density-based clustering (e.g. DBSCAN), and divide-and-conquer
strategies making repeated use of fuzzy $c$-means, and others.
To achieve maximum utility and to scale to large-scale simultaneous on
thousands of nodes, it is necessary to integrate machine learning 
algorithms which intelligently, and automatically, detect the changes
between these two states.

Finally, in this paper we focus entirely on \textit{in-situ} data
analysis.  It is easy to envision transforming our method into an
\textit{in-transit} data analysis to cross compare large ensembles of
trajectories evolving in parallel.  In this case, the eigenvalue
metadata is communicated to a central node for an ensemble analysis.
The \textit{in-transit} analysis is tasked with keeping track of which
trajectories or substructures are rapidly evolving. The filtered
knowledge can be ultimately used by the scientist to tune many 
simultaneous simulations on the fly
(e.g., terminating simulations that quickly converged to a
folded protein).

\clearpage
\section*{\sffamily \Large CONCLUSIONS}

As computing moves towards exascale the concurrency of supercomputers
is increasing dramatically; however, because of power constraints the
I/O bandwidth is essentially unchanged.  The increase in concurrency
makes it possible to run ever larger ensembles of protein folding
simulations.  These large ensemble simulations are capable of
generating data faster than it can be written to disk, causing the
movement of data to disk to substantially slow down the simulations.
Traditional analyses of trajectory data do not scale on exascale
machines because they rely on storing the entire generated datasets to
disk first, and moving the data to a centralized node or
dedicated cluster after the simulation is completed.

Our method of \textit{in-situ} data analysis makes it possible to
scale ensemble protein folding simulations in such a way to take full
advantage of the increasing concurrency of these future machines by
mapping each conformational frame in isolation into one or multiple
eigenvalues. The eigenvalues serve as metadata for the trajectory
analysis. We empirically demonstrate that, as a direct result of
eigenvalue stability, our choice of metadata captures conformational
changes for two different proteins: 1BDD and 1E0L composed of a bundle
of three $\alpha$-helices and a triple-stranded antiparallel
$\beta$-sheet topology respectively. Dealing with eigenvalues and
using their pattern locally to select relevant frames to save to disk,
we can significantly reduce the data movement during simulations. The
eigenvalues can be ultimately used to monitor the transition of
protein structures to stable states on the fly, without the need for
the scientist to move entire trajectories to local disk and analyze
them after the simulations are completed.

\clearpage
\section*{\sffamily \Large ACKNOWLEDGMENTS}

This work is supported by NSF grants \#CCF-1318445/1318417. The
authors gratefully acknowledge the use of Chimera for each of the
visualizations of protein conformations presented in this paper and
for the computation of RMSD. Chimera is developed by the Resource for
Biocomputing, Visualization, and Informatics at the University of
California, San Francisco (supported by NIGMS P41-GM103311). The
authors thank Ms. Agnieszka Lipska, Faculty of Chemistry, University
of Gdansk, for supplying the coarse-grained UNRES trajectories of
protein A and FBP-28 WW domain; Adam Liwo acknowledges the support of
Grant DEC-2013/10/M/ST4/00640 from the Polish National Science Center.

\clearpage
\section*{\sffamily \Large APPENDIX}

\subsection*{\sffamily \large The number of positive eigenvalues
matrices $D_{ij}$}

\begin{Proposition}
The matrix $D_{ij}$ has at most $5$ positive eigenvalues.
\end{Proposition}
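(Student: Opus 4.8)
The plan is to reduce the claim to a statement about the rank of a certain matrix built from the coordinates of the two substructures, and then invoke the classical theory of Euclidean distance matrices. First I would recall the key structural fact used earlier in the paper: if $\vec{r}_1,\dots,\vec{r}_m$ are points in $\mathbb{R}^3$ and $D=[\|\vec{r}_p-\vec{r}_q\|_2^2]$ is the associated EDM, then there is a rank-at-most-$5$ symmetric matrix decomposition of $D$. Concretely, writing $d_{pq}=\|\vec r_p\|^2 - 2\,\vec r_p\cdot\vec r_q + \|\vec r_q\|^2$, one has $D = \mathbf{1}\,u^T + u\,\mathbf{1}^T - 2G$, where $u_p=\|\vec r_p\|^2$, $\mathbf{1}$ is the all-ones vector, and $G=[\vec r_p\cdot\vec r_q]$ is the Gram matrix, which has rank at most $3$. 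Hence $D$ is a sum of matrices of ranks $1,1,\le 3$, so $\operatorname{rank}(D)\le 5$, which is exactly why $D$ has at most $5$ nonzero eigenvalues.

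Next I would apply this to $D_{ij}$. Let the first substructure have carbons $\vec a_1,\dots,\vec a_k$ and the second have $\vec b_1,\dots,\vec b_\ell$, all in $\mathbb{R}^3$, and let $D$ denote the full EDM on the concatenated point set $\vec a_1,\dots,\vec a_k,\vec b_1,\dots,\vec b_\ell$ (of size $(k+\ell)\times(k+\ell)$). The matrix $D_{ij}$ is obtained from this $D$ by zeroing out the two diagonal blocks (the $\vec a$--$\vec a$ block and the $\vec b$--$\vec b$ block), keeping only the off-diagonal block $C$ and its transpose. I would then express $D_{ij} = D - B$, where $B$ is the block-diagonal matrix $\operatorname{diag}(D^{(a)}, D^{(b)})$ with $D^{(a)}$ the $k\times k$ EDM of the $\vec a$'s and $D^{(b)}$ the $\ell\times\ell$ EDM of the $\vec b$'s. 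That does not immediately bound the rank (it gives $\le 15$). The sharper route is to write $D_{ij}$ directly: its $(a_p,b_q)$ entry is $\|\vec a_p\|^2 - 2\,\vec a_p\cdot \vec b_q + \|\vec b_q\|^2$, while its diagonal-block entries are $0$. So I would seek a low-rank additive decomposition of the whole $(k+\ell)\times(k+\ell)$ matrix. Partition vectors as $(x;y)$ with $x\in\mathbb{R}^k$, $y\in\mathbb{R}^\ell$, let $\mathbf{1}_a,\mathbf{1}_b$ be the all-ones vectors on the two blocks, let $\alpha\in\mathbb{R}^k$ have entries $\|\vec a_p\|^2$ and $\beta\in\mathbb{R}^\ell$ have entries $\|\vec b_q\|^2$, and let $A\in\mathbb{R}^{3\times k}$, $\tilde B\in\mathbb{R}^{3\times\ell}$ have the coordinate vectors as columns. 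Then a direct check shows
\[
D_{ij} \;=\; \begin{pmatrix}0\\\mathbf{1}_b\end{pmatrix}\!\begin{pmatrix}\alpha\\0\end{pmatrix}^{\!T} + \begin{pmatrix}\alpha\\0\end{pmatrix}\!\begin{pmatrix}0\\\mathbf{1}_b\end{pmatrix}^{\!T} + \begin{pmatrix}\mathbf{1}_a\\0\end{pmatrix}\!\begin{pmatrix}0\\\beta\end{pmatrix}^{\!T} + \begin{pmatrix}0\\\beta\end{pmatrix}\!\begin{pmatrix}\mathbf{1}_a\\0\end{pmatrix}^{\!T} - 2\begin{pmatrix}A^T & 0\\0 & 0\end{pmatrix}\!\begin{pmatrix}0 & 0\\ \tilde B & 0\end{pmatrix} - 2\begin{pmatrix}0 & \tilde B^T\\0 & 0\end{pmatrix}\!\begin{pmatrix}0 & 0\\A & 0\end{pmatrix}.
\]
The last two terms together equal $-2\begin{pmatrix}0 & A^T\tilde B\\ \tilde B^T A & 0\end{pmatrix}$, whose rank is at most $2\cdot 3=6$; the first four terms contribute rank at most $4$. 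This only gives $\le 10$, so the naive bound is still too weak, and this is the main obstacle: I must exploit the symmetry $\lambda\leftrightarrow-\lambda$ (already noted in the paper for $D_{ij}$) to halve the count of positive eigenvalues.

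The clean way to get $5$ is therefore to work with the off-diagonal block alone. Because $D_{ij}=\left[\begin{smallmatrix}0&C\\ C^T&0\end{smallmatrix}\right]$, its nonzero eigenvalues come in pairs $\pm\sigma$ where $\sigma$ runs over the nonzero singular values of $C$; hence the number of positive eigenvalues of $D_{ij}$ equals $\operatorname{rank}(C)$. So it suffices to show $\operatorname{rank}(C)\le 5$. Writing $c_{pq}=\|\vec a_p\|^2 - 2\,\vec a_p\cdot\vec b_q + \|\vec b_q\|^2$, we get $C = \alpha\,\mathbf{1}_b^T - 2A^T\tilde B + \mathbf{1}_a\,\beta^T$, a sum of a rank-$1$ term, a rank-$\le 3$ term, and a rank-$1$ term, so $\operatorname{rank}(C)\le 5$. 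Therefore $D_{ij}$ has at most $5$ positive eigenvalues, as claimed. I would close by remarking that this also re-derives the paper's earlier observations — that $D_{ij}$'s spectrum is symmetric about $0$ and that $\operatorname{tr}(D_{ij})=0$ — and that equality ($\operatorname{rank} C = 5$, i.e. exactly $5$ positive eigenvalues) holds generically, precisely when the affine configurations of the two substructures are in sufficiently general position. The only bookkeeping to be careful about is the indexing in the rank-one outer products and the factor $-2$; none of it is deep.
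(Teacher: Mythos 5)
Your argument is correct, but it follows a genuinely different route from the paper's. The paper's proof never isolates the off-diagonal block $C$: it takes as known that the full EDM $D$ on the combined point set has rank at most $5$, transfers a linear dependence among any six rows of $D$ to the corresponding rows of $D_{ij}$ (zeroing a diagonal block preserves any dependence that already held on the full rows), concludes that each of the two row-blocks of $D_{ij}$ contributes at most $5$ independent rows so that $\operatorname{rank}(D_{ij})\leq 10$, and only then invokes the $\lambda\mapsto-\lambda$ symmetry to halve the count. You instead use the sharper structural fact that the nonzero spectrum of $\left[\begin{smallmatrix}0&C\\ C^{T}&0\end{smallmatrix}\right]$ is exactly $\{\pm\sigma\}$ over the nonzero singular values of $C$, so the number of positive eigenvalues \emph{equals} $\operatorname{rank}(C)$, and then bound $\operatorname{rank}(C)\leq 1+1+3=5$ directly from the decomposition $C=\alpha\mathbf{1}_b^{T}+\mathbf{1}_a\beta^{T}-2A^{T}\tilde{B}$. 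Your route is more self-contained (it re-derives the rank-$5$ property of EDMs from the Gram decomposition rather than citing it) and yields strictly more: the positive-eigenvalue count is identified exactly as $\operatorname{rank}(C)$, so the bound $5$ is attained precisely when $\operatorname{rank}(C)=5$, and one also reads off $\operatorname{rank}(D_{ij})=2\operatorname{rank}(C)$; the paper's argument is softer, needing only elementary row manipulation plus the already-stated rank bound. One small blemish: in the six-term display for $D_{ij}$ that you then abandon, the blocks in the two factored Gram terms are misplaced (as written each product is zero; $\tilde{B}$ should occupy the $(1,2)$ block of the second factor), but since that decomposition is discarded as too weak it does not affect the validity of your final argument.
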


\begin{proof}
First, we note that $D_{ij}$ can be viewed as a \textit{transformed}
Euclidean distance matrix.  Specifically, let $D$ be the Euclidean
distance matrix formed using all the selected atoms from the two
structures represented by $D_{ij}$.  In block form, $D$ can be
decomposed as  
\[D=\left[\text{\begin{tabular}{c|c} $A_{ii}$ & $A_{ij}$ \\
\hline $A_{ji}$ & $A_{jj}$ \end{tabular}}\right]\]
where $A_{ii}$ contains distances between two points in structure $i$,
$A_{jj}$ contains distances between two points in structure $j$, and
$A_{ji}=A_{ij}^{T}$ contains distances between one point in structure $i$
and one point in structure $j$.
By construction, $D_{ij}$ has the form:
\[D_{ij}:=\left[\text{\begin{tabular}{c|c} $0$ & $A_{ij}$ \\
\hline $A_{ji}$ & $0$ \end{tabular}}\right].\]
With \textit{slight} abuse of notation, let us say that $A_{ij}$ as
$i$ rows and $j$ columns and hence $D$ and $D_{ij}$ have $i+j$ rows
and columns.  Let $r_{k}$ be the $k^{\text{th}}$ row of $D$ and
let $r_{k}^{\prime}$ be the corresponding $k^{\text{th}}$ 
row of $D_{ij}$.

Because of the block structure of $D_{ij}$ whenever $\lambda$ is
an eigenvalue of $D_{ij}$ so is $-\lambda$.  Thus, to show
that $D_{ij}$ has at most 5 positive eigenvalues, it suffices to
show that $D_{ij}$ has at most 10 non-zero eigenvalues.
We recall that since $D$ is a Euclidean distance matrix of a
$3$-dimensional object that $D$ has at most $5$ non-zero eigenvalues.
Thus, $D$ has rank (at most) $5$ and consequently at most $5$ rows
of $D$ can be linearly independent.  We also note that it is clear
that any row $r^{\prime}_{k}$ with $k>i$ is linearly independent of the
set of rows $\{r^{\prime}_{1}, r^{\prime}_{2},...,r^{\prime}_{i}\}$.

We will now show that at most $5$ rows of
$\{r^{\prime}_{1},...,r^{\prime}_{i}\}$ can be linearly independent.
To do this, we need only show that any 6 of those rows are linearly
dependent.  Without loss of generality, we can assume those rows
are $r^{\prime}_{1}, r^{\prime}_{2},...,r^{\prime}_{6}$.
Since $D$ has rank at most $5$ we know that $r_1, ..., r_6$ are linearly
dependent.  Thus, there exist $\alpha_1,...,\alpha_{6}$ (not all zero)
so that 
\[ \alpha_{1}r_1 + \alpha_2 r_2 + ... + \alpha_{6}r_{6} = 0.\]
It follows that the same choice of coefficients yields:
\[ \alpha_{1}r_{1}^{\prime} + \alpha_{2}r_{2}^{\prime} +
... + \alpha_{6}r_{6}^{\prime} = 0.\]
This is most easily seen when we think about the vectors $r_{k}$ and
$r^{\prime}_{k}$ coordinate-wise.  For coordinates $1,...,i$ the
vectors $r_{k}$ and $r^{\prime}_{k}$ have the same (non-zero) value.
But, because $\alpha_{1}r_1 + \alpha_2 r_2 + ... + \alpha_{6}r_{6} = 0$
it follows that, in the second sum, that coordinate is zero.
In the remaining coordinates $i+1, i+2, ..., i+j$ all the entries of
$r_{k}^{\prime}$ are zero, hence any choice of $\alpha$ would result in 
zeros in these coordinates.  Thus 
$\alpha_{1}r_{1}^{\prime} + \alpha_{2}r_{2}^{\prime} + ... +
\alpha_{6}r_{6}^{\prime} = 0$ as claimed and these 6 rows are linearly
dependent.  By a similar argument, any 6 rows of 
$\{r_{i+1}^{\prime}, ..., r_{i+j}^{\prime}\}$ are linearly dependent.
Thus, $D_{ij}$ has at most 10 linearly independent rows; $D_{ij}$
has rank at most $10$; and, therefore, $D_{ij}$ has at most 10
non-zero eigenvalues.
\end{proof}

\clearpage


\bibliography{biblio}

\clearpage

\begin{figure}
\caption{\label{fig:fig1overview} Traditional centralized data analysis.
A single trajectory is computed on one node which writes all of its data to the
parallel file system. Many hundreds or thousands of trajectories can be computed
in parallel; each node writes all of its data to the parallel file system
simultaneously.}
\end{figure}

\begin{figure}
\caption{\label{fig:fig2overview} {\it In-situ} data analysis in our
  proposed method. Each frame is converted into one or multiple
  eigenvalues in isolation. The eigenvalues pattern is used to select
  the frames to store to disk.}
\end{figure}

\begin{figure}
\caption{\label{fig:1bdd} Native structure of protein 1BDD. The
  protein consists of 60 amino acids and it is composed of a bundle of
  three $\alpha$-helices.  Helix~2 (Glu25-Asp37) and Helix~3
  (Ser42-Ala55) are antiparallel to each other, and Helix~1
  (Gln10-His19) is tilted with respect to the other two.}
\end{figure}

\begin{figure}
\caption{\label{fig:1e0l} Native structure of protein 1E0L. The
  protein consists of 35 amino acids and it presents a triple-stranded
  antiparallel $\beta$-sheet topology.}
\end{figure}

\begin{figure}
\caption{\label{helix2_formation} Formation of the second (middle)
  helix of the protein 1BDD.  The largest eigenvalue of the distance
  matrix corresponding to this substructure is plotted above and, for
  comparison, the RMSD for each frame is shown below.  The RMSD is
  measured against a reference structure; only the region
  corresponding to the helix is considered.  Below are representative
  conformations from each of the colored regions.  The highlighted
  (cyan) portion of the protein corresponds to the region under
  consideration that eventually becomes the second helix.  }
\end{figure}

\begin{figure}
\caption{\label{helix3_formation} Formation of the third helix of the
  protein 1BDD.  The largest eigenvalue of the distance matrix
  corresponding to this substructure is plotted above and, for
  comparison, the RMSD for each frame is shown below.  The RMSD is
  measured against a reference structure; only the region
  corresponding to the helix is considered.  Below are representative
  conformations from each of the colored regions.  The highlighted
  (cyan) portion of the protein corresponds to the region under
  consideration that eventually becomes the second helix.  }
\end{figure}

\begin{figure}
\caption{\label{sheet1_validation} Largest eigenvalue of the distance
  matrix for each frame of the trajectory. The distance matrix
  corresponds to the region that will become the first $\beta$-strand.
  Colored regions, clusters, represent expected \textit{meta stable}
  states.  From each cluster, three representatives are chosen.  Each
  frame chosen as a representative is visualized below the eigenvalue
  plot.  The region of the first $\beta$-strand is highlighted in cyan.
}
\end{figure}

\begin{figure}
\caption{\label{sheet2_validation} Largest eigenvalue of the distance
  matrix for each frame of the trajectory.  The distance matrix
  corresponds to the region that will become the second $\beta$-strand.
  Colored regions, clusters, represent expected \textit{meta stable}
  states.  From each cluster, three representatives are chosen.  Each
  frame chosen as a representative is visualized below the eigenvalue
  plot.  The region of the first $\beta$-strand is highlighted in cyan.
}
\end{figure}

\begin{figure}
\caption{\label{sheet3_validation} Largest eigenvalue of the distance
  matrix for each frame of the trajectory.  The distance matrix
  corresponds to the region that will become the third $\beta$-strand.
  Colored regions, clusters, represent expected \textit{meta stable}
  states.  From each cluster, three representatives are chosen.  Each
  frame chosen as a representative is visualized below the eigenvalue
  plot.  The region of the first $\beta$-strand is highlighted in cyan.
}
\end{figure}

\begin{figure}
\caption{\label{sheet1_heatmap} RMSD for each pair of representatives
  selected in Figure~\ref{sheet1_validation} and record this value in
  a matrix.  The $ij$'th entry is the RMSD between representative $i$
  and representative $j$.  When computing the RMSD, we consider only
  the highlighted region of the protein.  The matrix is colored so
  that lighter (whiter) colors correspond to smaller RMSD and darker
  (redder) colors correspond to larger RMSD.  }
\end{figure}

\begin{figure}
\caption{\label{sheet2_heatmap} RMSD for each pair of representatives
  selected in Figure~\ref{sheet2_validation} and record this value in
  a matrix.  The $ij$'th entry is the RMSD between representative $i$
  and representative $j$.  When computing the RMSD, we consider only
  the highlighted region of the protein.  The matrix is colored so
  that lighter (whiter) colors correspond to smaller RMSD and darker
  (redder) colors correspond to larger RMSD.  }
\end{figure}

\begin{figure}
\caption{\label{sheet3_heatmap} RMSD for each pair of representatives
  selected in Figure~\ref{sheet3_validation} and record this value in
  a matrix.  The $ij$'th entry is the RMSD between representative $i$
  and representative $j$.  When computing the RMSD, we consider only
  the highlighted region of the protein.  The matrix is colored so
  that lighter (whiter) colors correspond to smaller RMSD and darker
  (redder) colors correspond to larger RMSD.  }
\end{figure}

\begin{figure}
\caption{\label{ev_helix3_moving} Largest eigenvalue of all three
  types of matrices we consider. The top plot shows the largest
  eigenvalue of the distance matrix formed from the entire protein
  (1BDD).  The middle row of plots show the largest eigenvalue of the
  distance matrix formed by individual substructures (regions that
  become helices).  The color of the plot corresponds to the similarly
  colored region of the protein in Figure~\ref{helix3_moving}.  The
  bottom row of plots show the largest eigenvalue of the
  inter-structures distance matrix formed by looking at two helices.
  There is a large change in the top plot between frames 1300 and
  1400.  There is no corresponding change in the individual structure
  plots (middle row); but we observe corresponding large changes in
  two of the three inter-structure plots.  }
\end{figure}

\begin{figure}
\caption{\label{helix3_moving} Selected conformations of protein 1BDD
  frames 1300-1410.  We observe that the third (orange) helix
  swings away from the other two helices, and then folds back
  together.  This movement was captured by the eigenvalues recorded in
  Figure \ref{ev_helix3_moving}.  }
\end{figure}



\clearpage

\vspace*{0.1in}   
	\begin{center}
		\includegraphics[width=0.8\columnwidth,keepaspectratio=true]{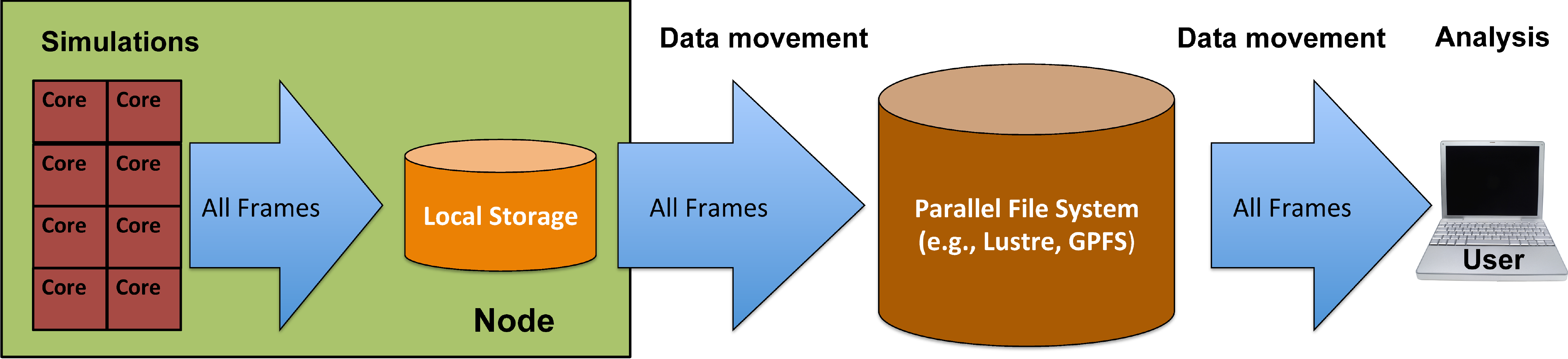}
	\end{center}
\vspace{0.25in}
\hspace*{3in}
{\Large
	\begin{minipage}[t]{3in}
	\baselineskip = .5\baselineskip
	Figure \ref{fig:fig1overview} \\
	Johnston, Zhang, Liwo, Crivelli, and Taufer \\
	J.\ Comput.\ Chem.
	\end{minipage}
}

\vspace*{0.1in}   
	\begin{center}
		\includegraphics[width=0.8\columnwidth,keepaspectratio=true]{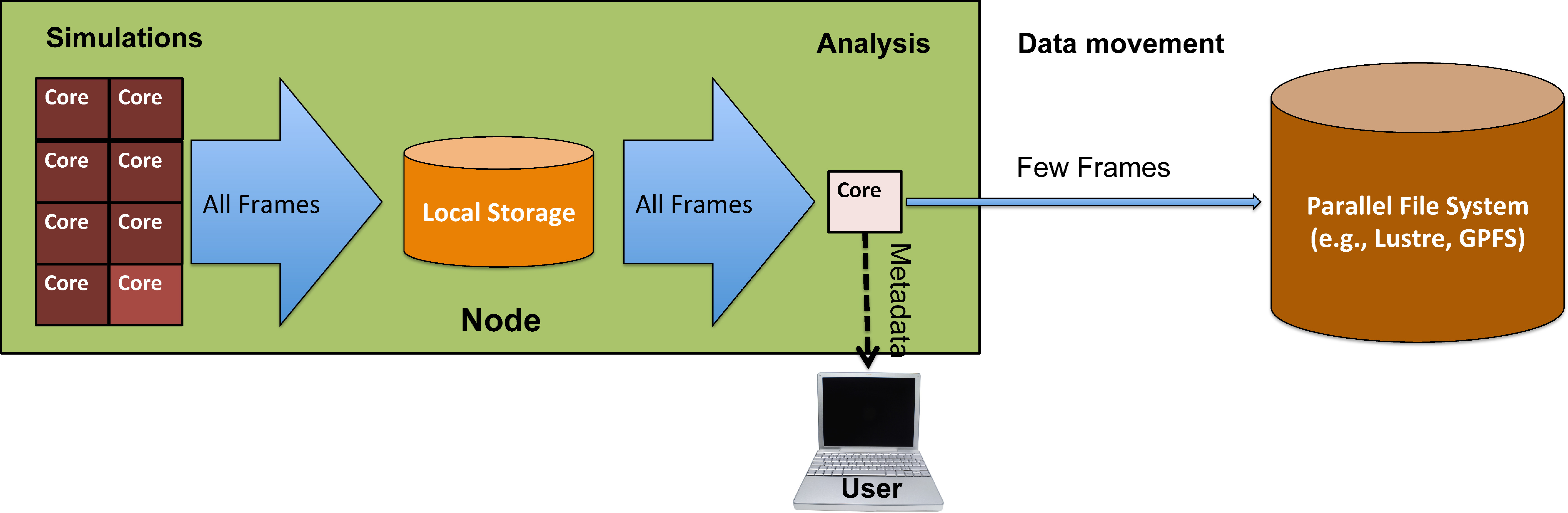}
	\end{center}
\vspace{0.25in}
\hspace*{3in}
{\Large
	\begin{minipage}[t]{3in}
	\baselineskip = .5\baselineskip
	Figure \ref{fig:fig2overview} \\
	Johnston, Zhang, Liwo, Crivelli, and Taufer \\
	J.\ Comput.\ Chem.
	\end{minipage}
}

\vspace*{0.1in}   
	\begin{center}
		\includegraphics[width=0.8\columnwidth,keepaspectratio=true]{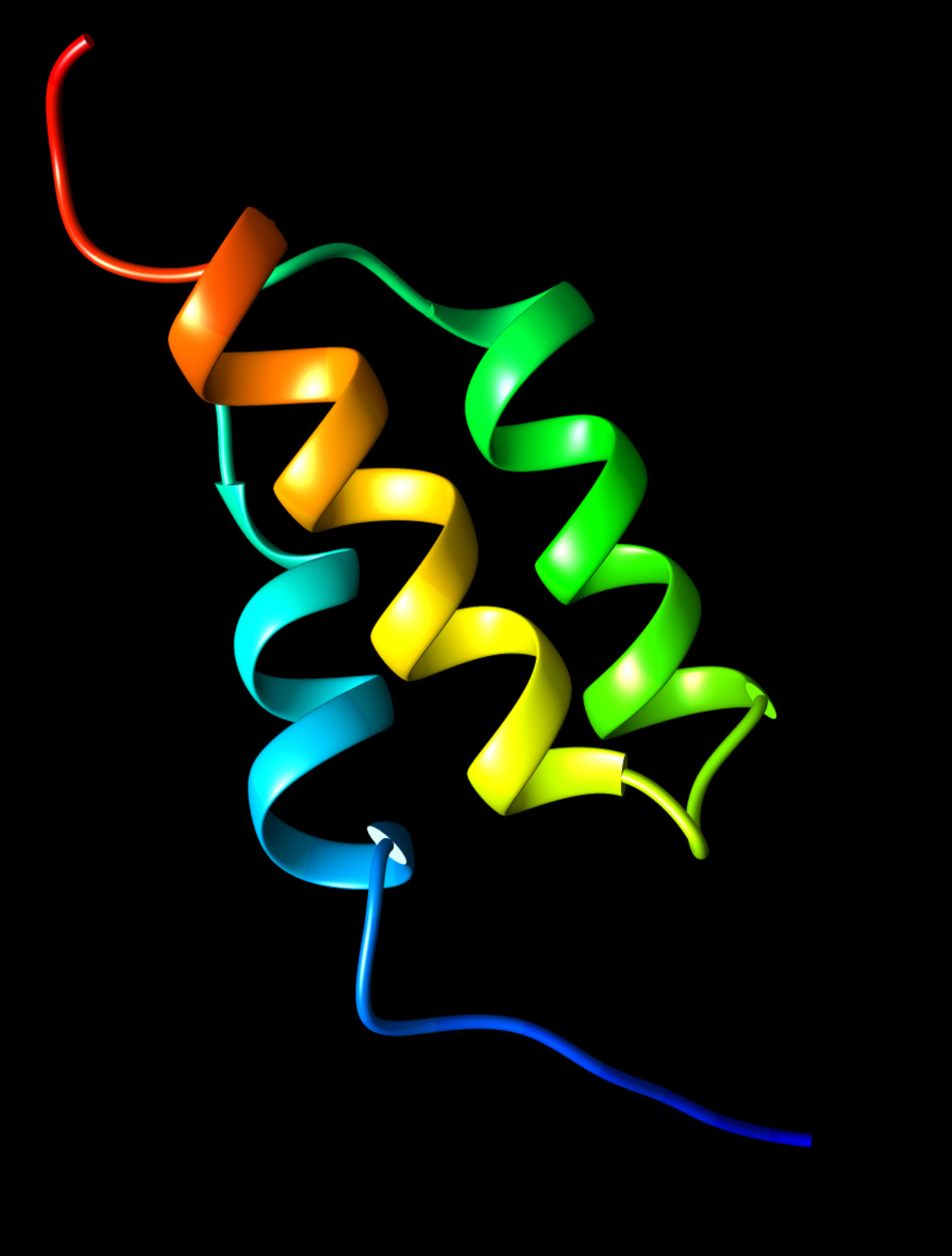}
	\end{center}
\vspace{0.25in}
\hspace*{3in}
{\Large
	\begin{minipage}[t]{3in}
	\baselineskip = .5\baselineskip
	Figure \ref{fig:1bdd} \\
	Johnston, Zhang, Liwo, Crivelli, and Taufer \\
	J.\ Comput.\ Chem.
	\end{minipage}
}

\vspace*{0.1in}   
	\begin{center}
		\includegraphics[width=0.8\columnwidth,keepaspectratio=true]{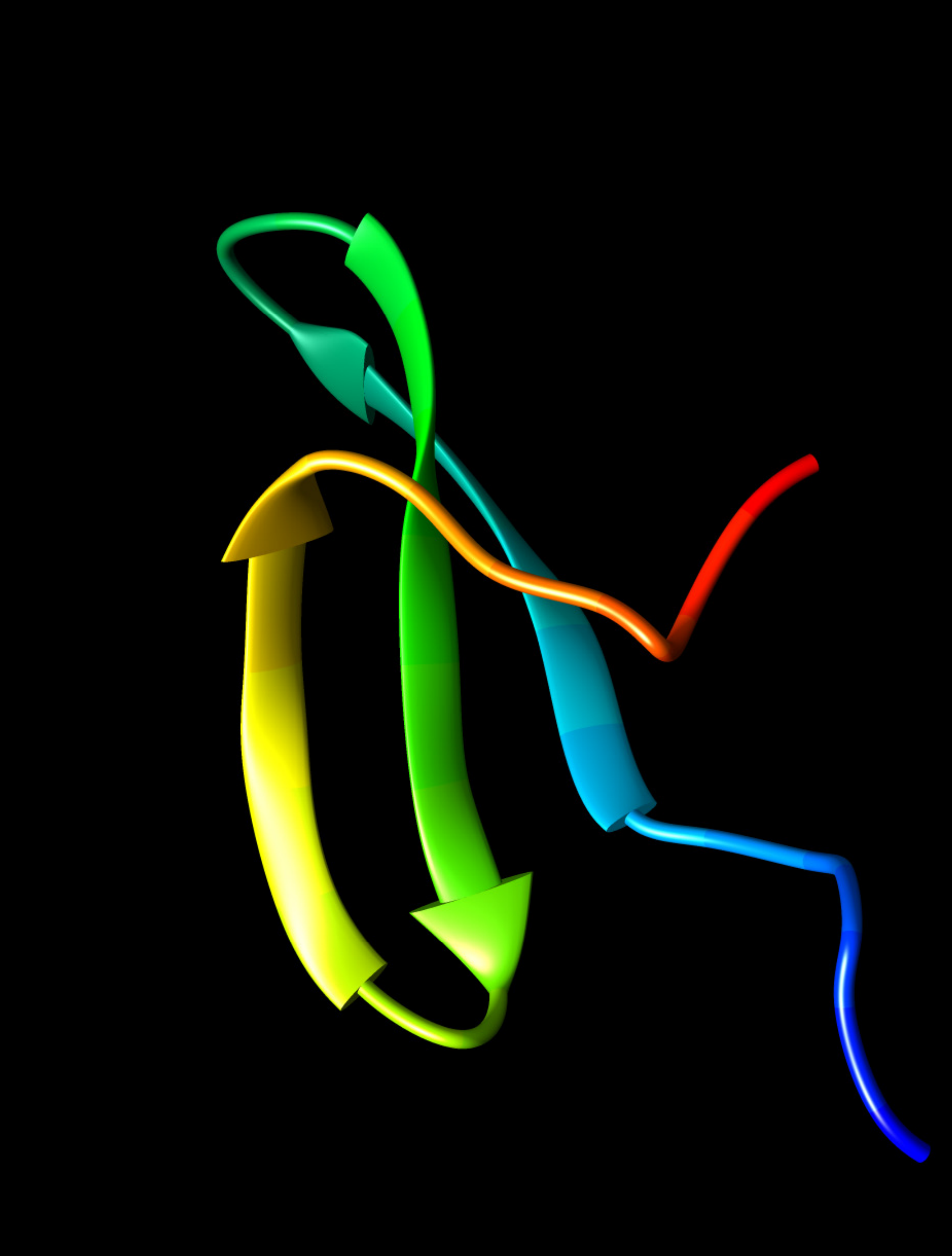}
	\end{center}
\vspace{0.25in}
\hspace*{3in}
{\Large
	\begin{minipage}[t]{3in}
	\baselineskip = .5\baselineskip
	Figure \ref{fig:1e0l} \\
	Johnston, Zhang, Liwo, Crivelli, and Taufer \\
	J.\ Comput.\ Chem.
	\end{minipage}
}

\clearpage

	\begin{minipage}[t]{\textwidth}
		\centering
		\includegraphics[width=0.85\columnwidth,keepaspectratio=true]{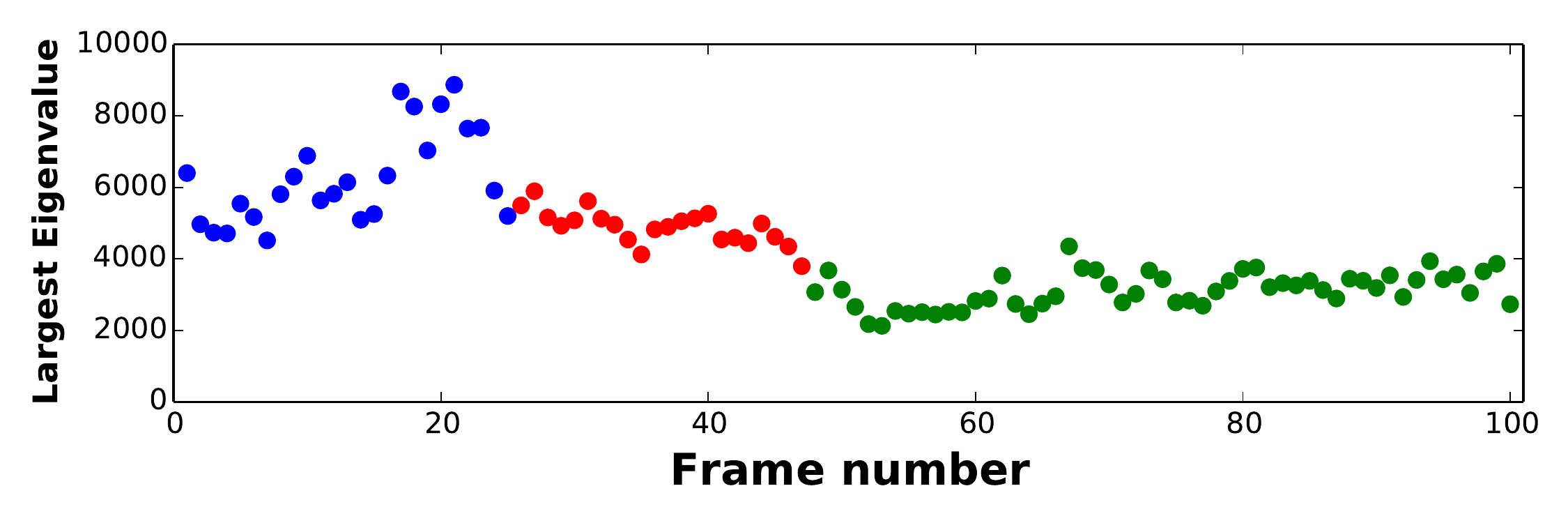}\hspace{.75 in}	

		(a)
	\end{minipage}
	\begin{minipage}[t]{\textwidth}
		\centering
		\includegraphics[width=0.8\columnwidth,keepaspectratio=true]{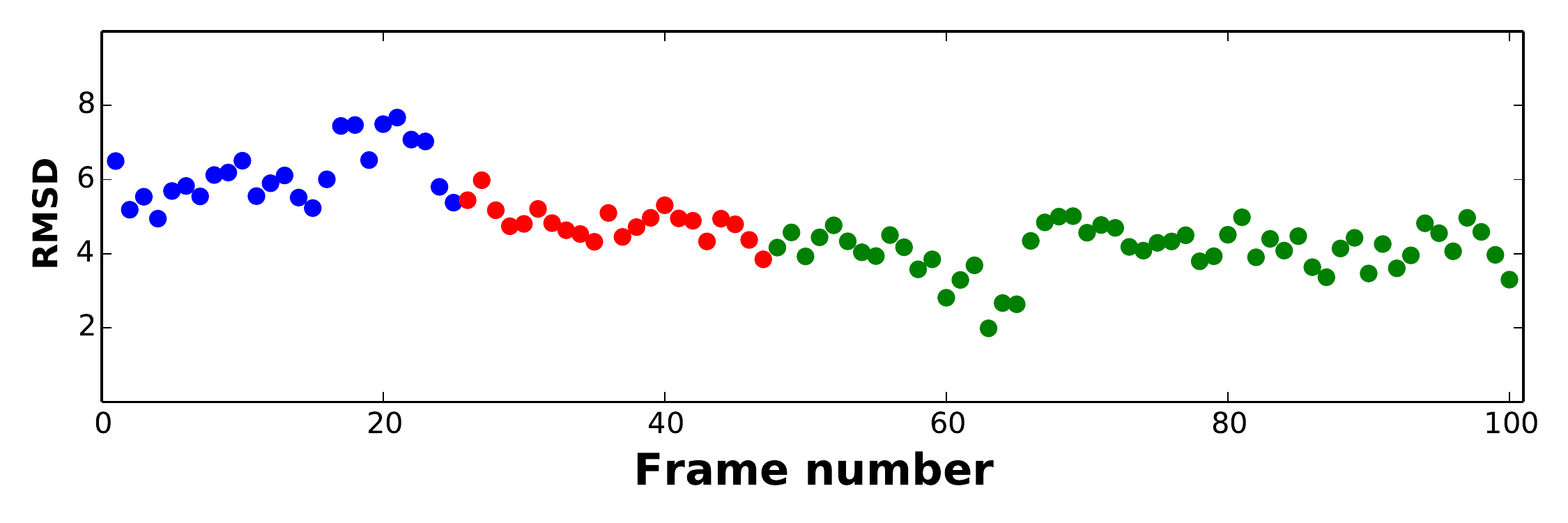}
	
		(b)
	\end{minipage}
	\begin{minipage}[t]{\textwidth}
		\centering
		\includegraphics[width=0.8\columnwidth,keepaspectratio=true]{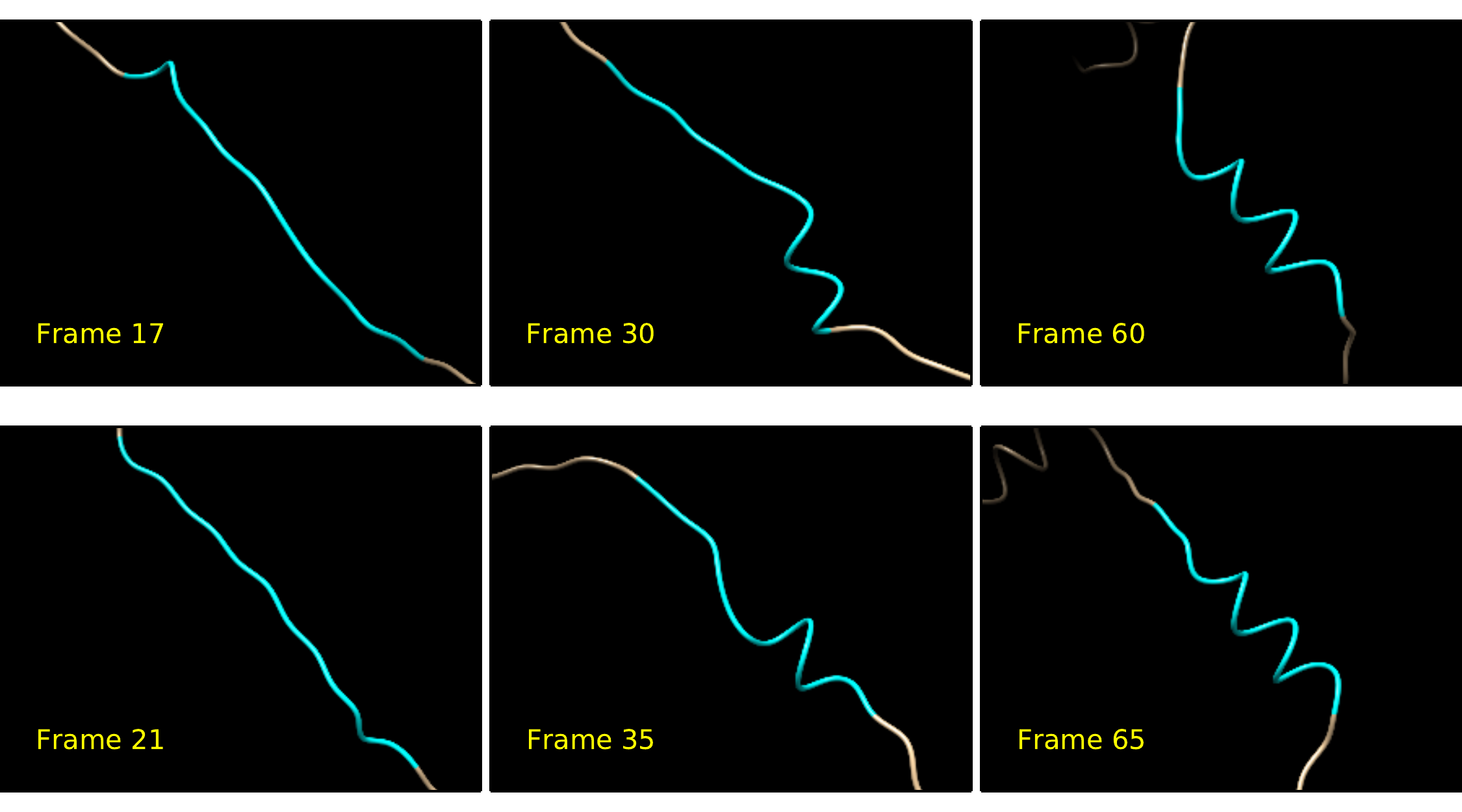}
	
		(c)
	\end{minipage}
\vspace{0.25in}
\hspace*{3in}
{\Large
	\begin{minipage}[t]{3in}
	\baselineskip = .5\baselineskip
	Figure \ref{helix2_formation} \\
	Johnston, Zhang, Liwo, Crivelli, and Taufer \\
	J.\ Comput.\ Chem.
	\end{minipage}
}

	\begin{minipage}[t]{\textwidth}
		\centering
		\includegraphics[width=0.85\columnwidth,keepaspectratio=true]{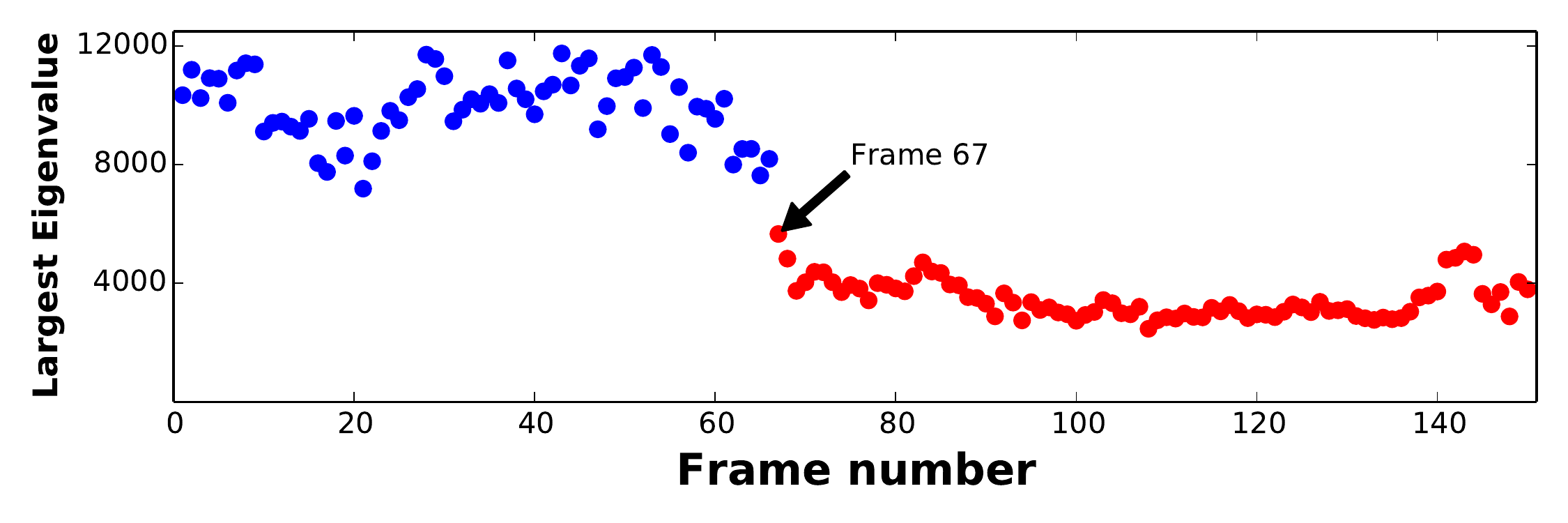}\hspace{.75 in}
	
		(a)
	\end{minipage}
	\begin{minipage}[t]{\textwidth}
		\centering
		\includegraphics[width=0.8\columnwidth,keepaspectratio=true]{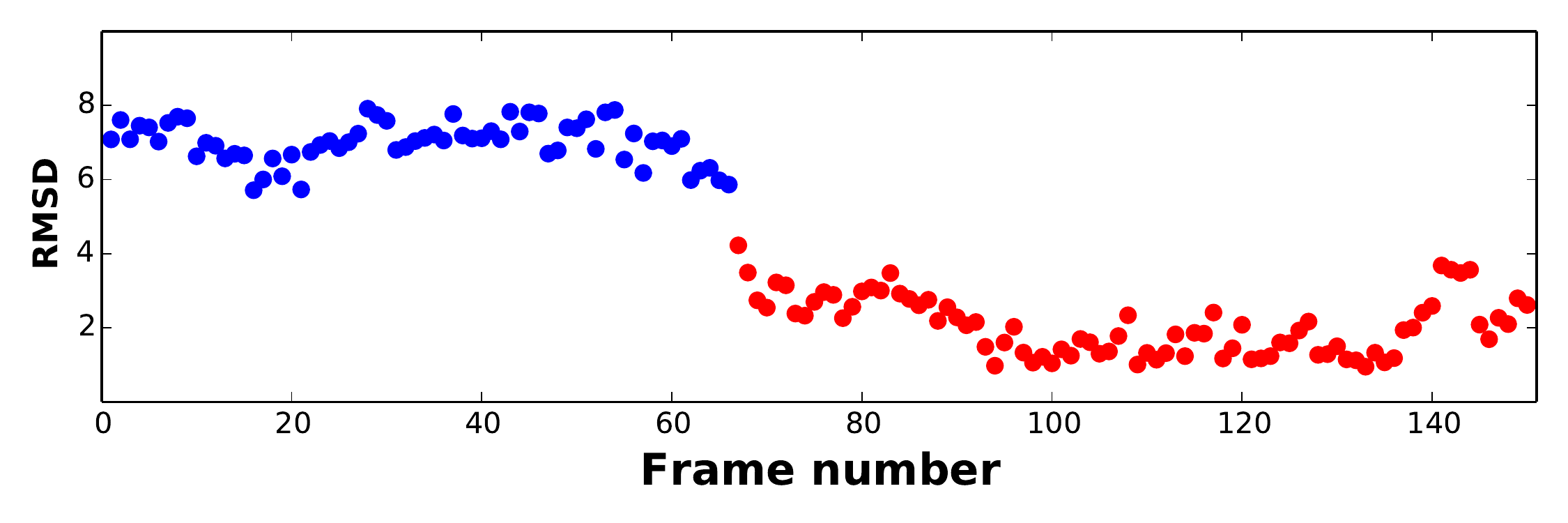}

		(b)
	\end{minipage}
	\begin{minipage}[t]{\textwidth}
		\centering
		\includegraphics[width=0.8\columnwidth,keepaspectratio=true]{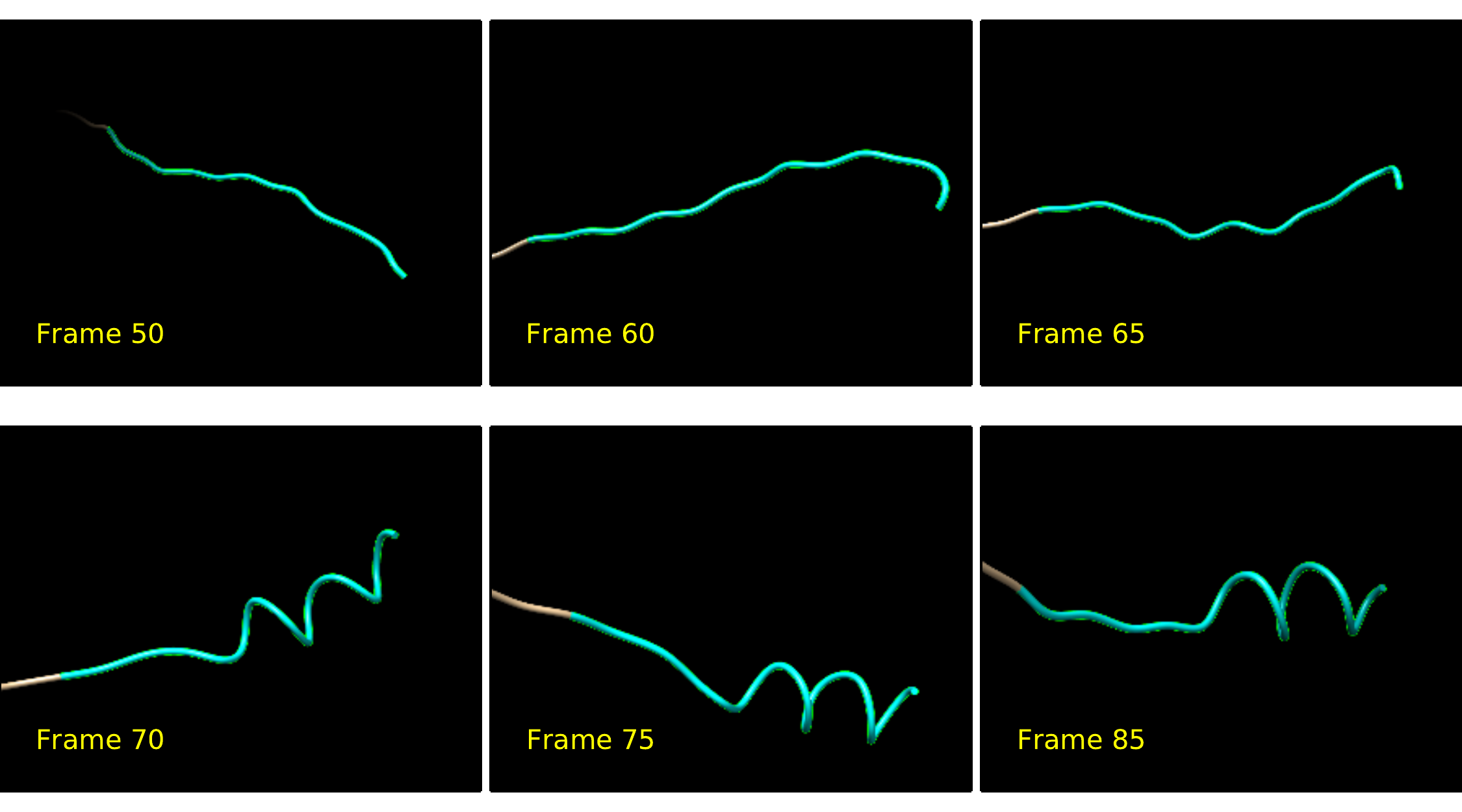}		

		(c)
	\end{minipage}
\vspace{0.25in}
\hspace*{3in}
{\Large
	\begin{minipage}[t]{3in}
	\baselineskip = .5\baselineskip
	Figure \ref{helix3_formation} \\
	Johnston, Zhang, Liwo, Crivelli, and Taufer \\
	J.\ Comput.\ Chem.
	\end{minipage}
}

\vspace*{0.1in}   
	\begin{minipage}[t]{\textwidth}
	\centering
		\includegraphics[width=0.9\columnwidth,keepaspectratio=true]{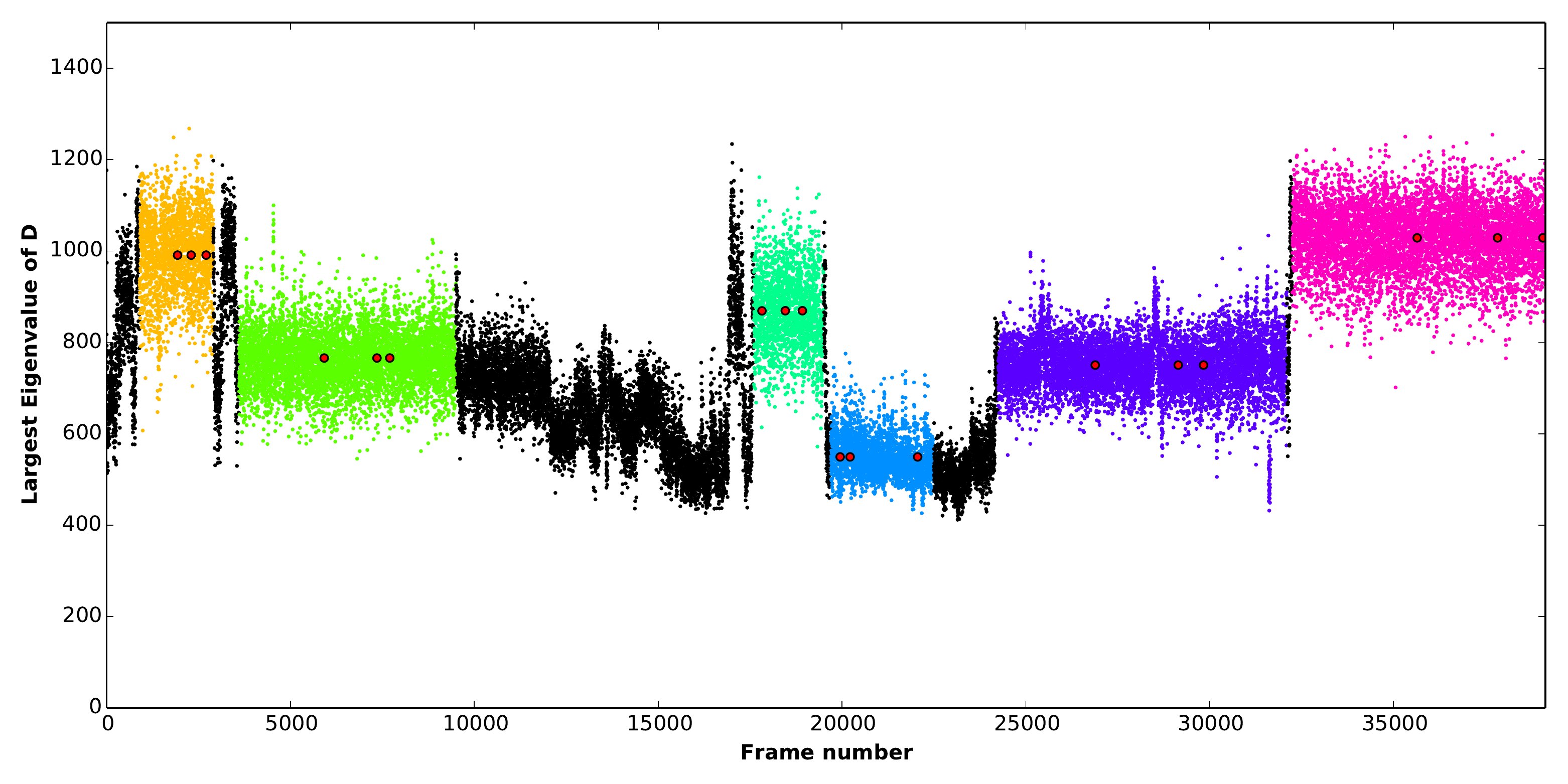}

		(a)
	\end{minipage}
	\begin{minipage}[t]{\textwidth}
	\centering
		\includegraphics[width=0.9\columnwidth,keepaspectratio=true]{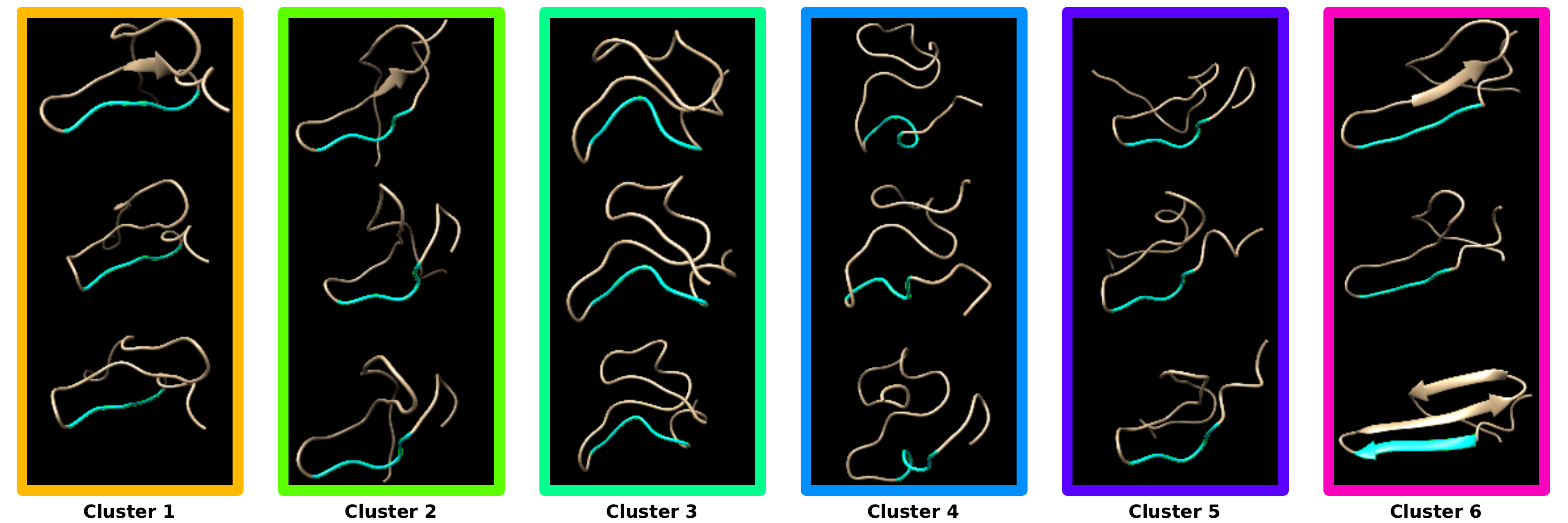}

		(b)
	\end{minipage}
\vspace{0.25in}
\hspace*{3in}
{\Large
	\begin{minipage}[t]{3in}
	\baselineskip = .5\baselineskip
	Figure \ref{sheet1_validation} \\
	Johnston, Zhang, Liwo, Crivelli, and Taufer \\
	J.\ Comput.\ Chem.
	\end{minipage}
}

\vspace*{0.1in}
	\begin{minipage}[t]{\textwidth}
	\centering
		\includegraphics[width=0.9\columnwidth,keepaspectratio=true]{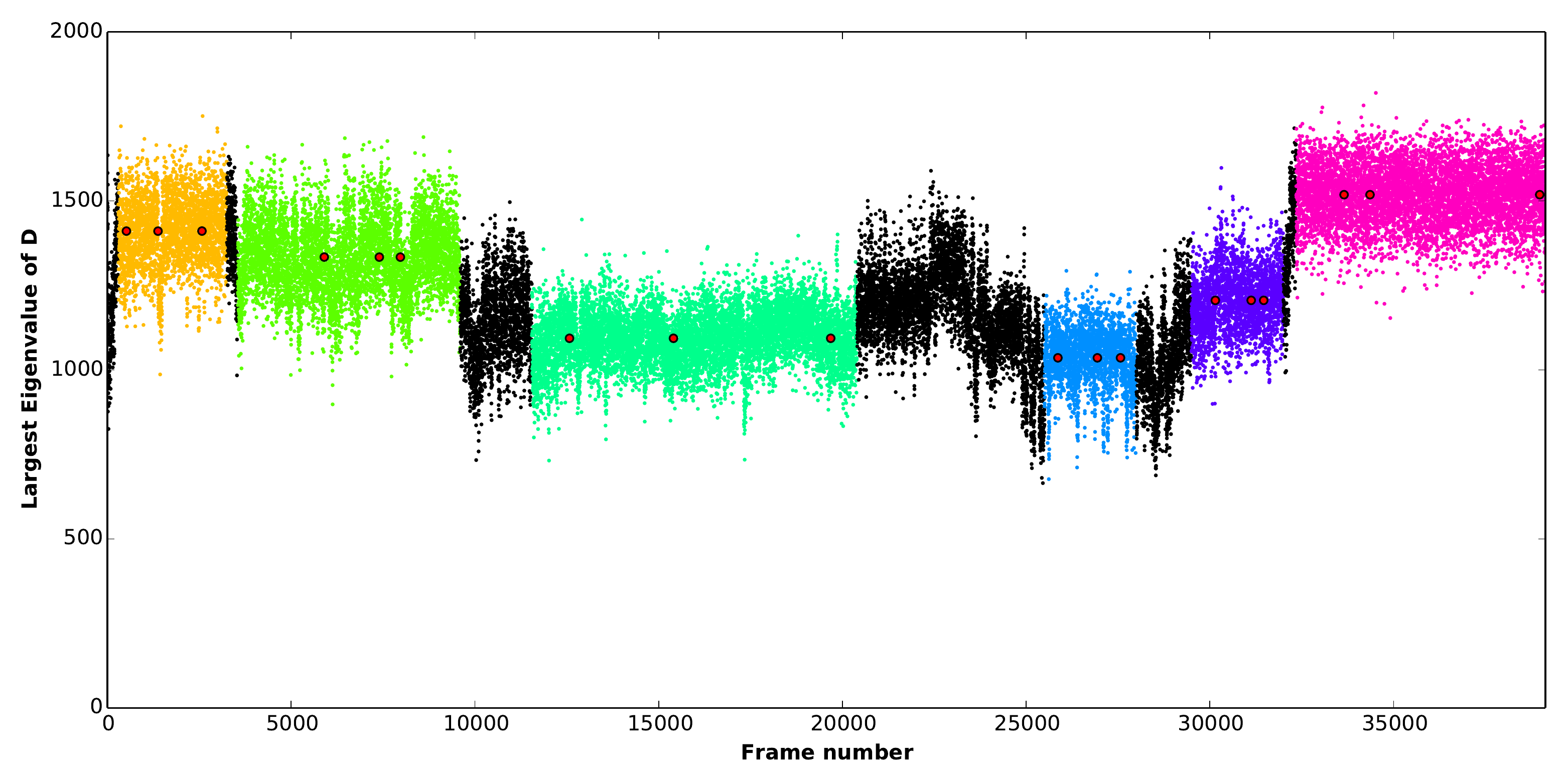}

		(a)
	\end{minipage}
	\begin{minipage}[t]{\textwidth}
	\centering
		\includegraphics[width=0.9\columnwidth,keepaspectratio=true]{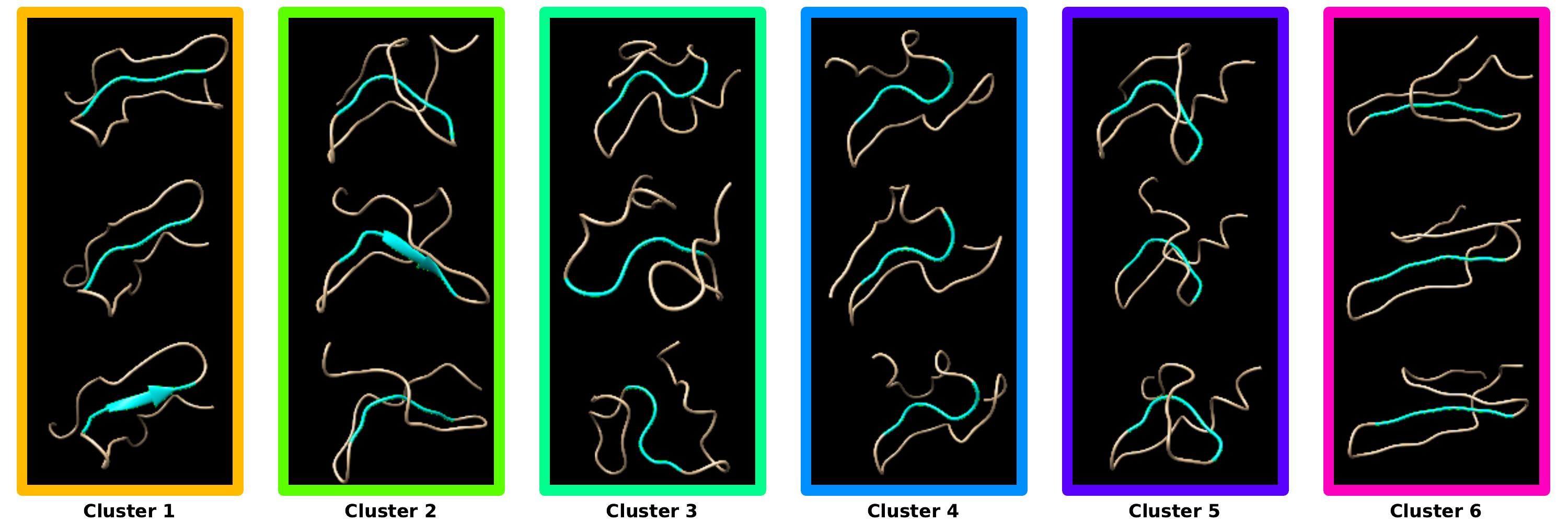}

		(b)
	\end{minipage}
\vspace{0.25in}
\hspace*{3in}
{\Large
	\begin{minipage}[t]{3in}
	\baselineskip = .5\baselineskip
	Figure \ref{sheet2_validation} \\
	Johnston, Zhang, Liwo, Crivelli, and Taufer \\
	J.\ Comput.\ Chem.
	\end{minipage}
}

\vspace*{0.1in}   
	\begin{minipage}[t]{\textwidth}
	\centering
		\includegraphics[width=0.9\columnwidth,keepaspectratio=true]{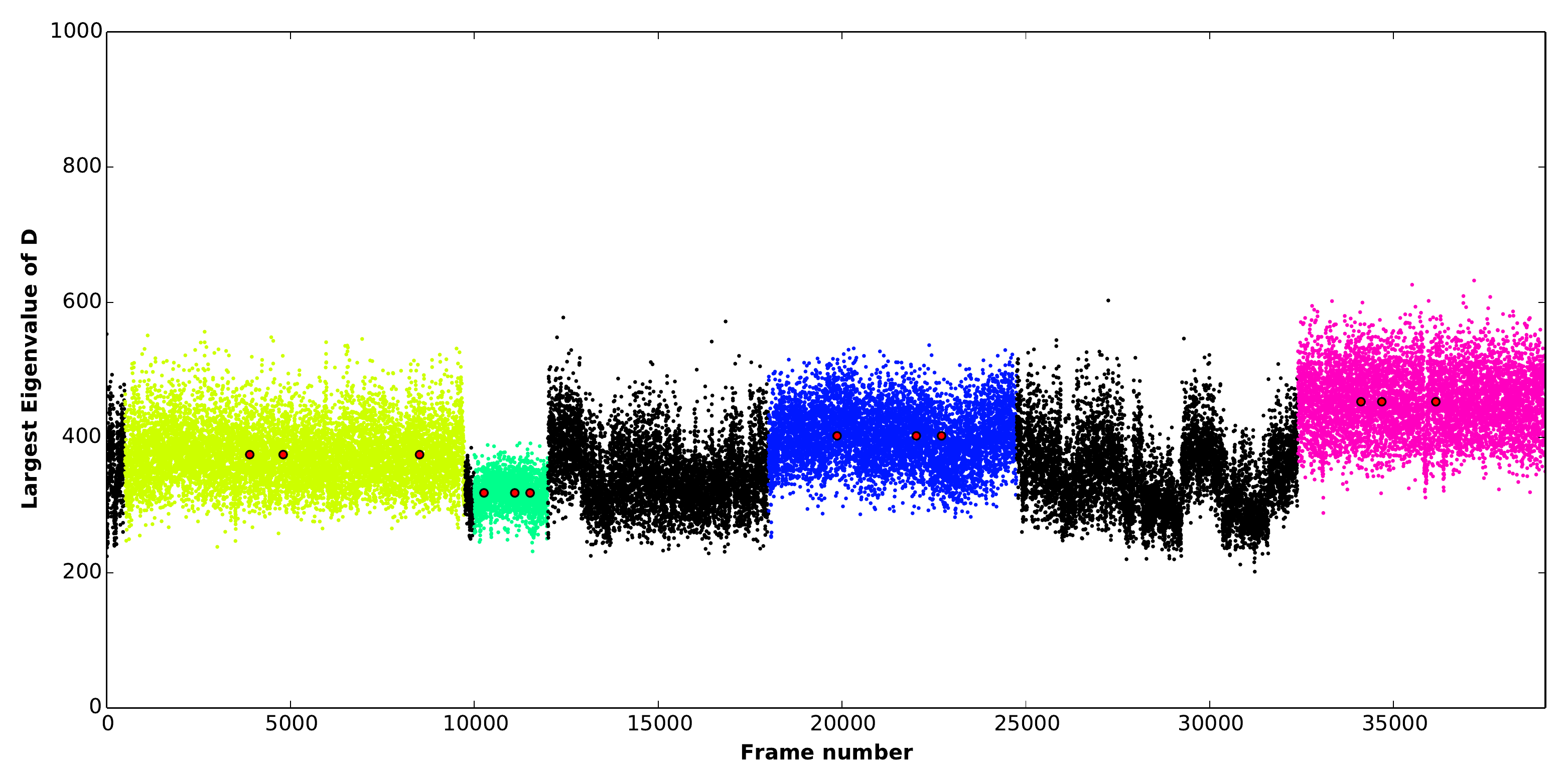}

		(a)
	\end{minipage}
	\begin{minipage}[t]{\textwidth}
	\centering
		\includegraphics[width=0.9\columnwidth,keepaspectratio=true]{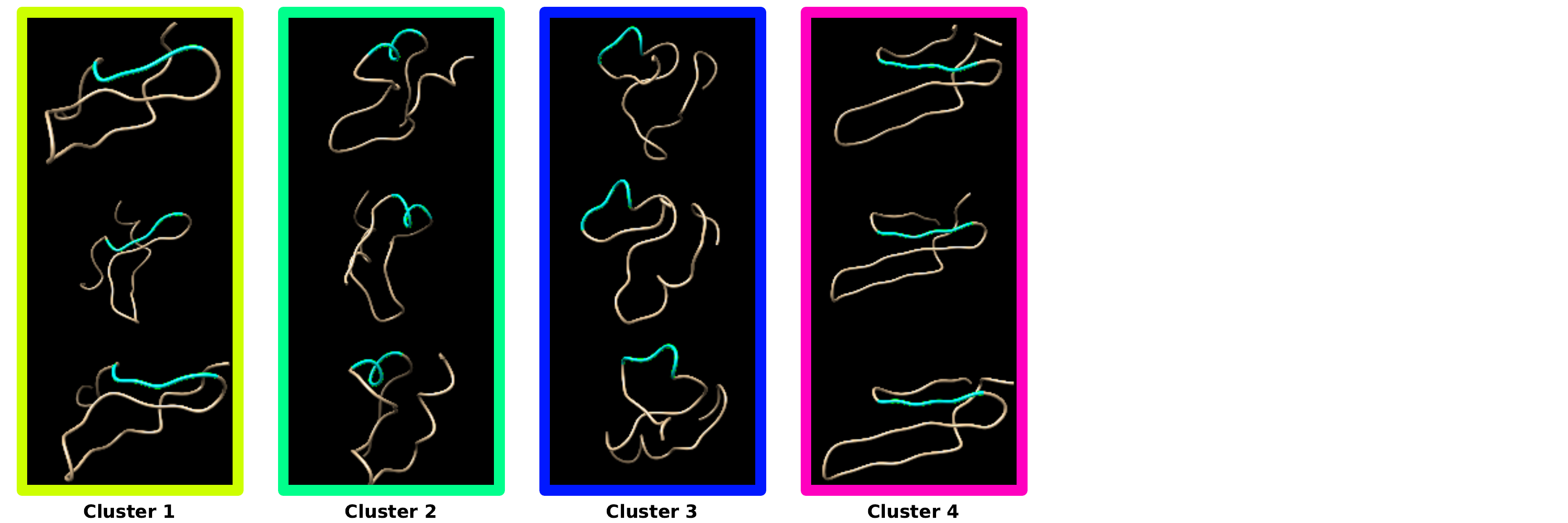}

		(b)
	\end{minipage}
\vspace{0.25in}
\hspace*{3in}
{\Large
	\begin{minipage}[t]{3in}
	\baselineskip = .5\baselineskip
	Figure \ref{sheet3_validation} \\
	Johnston, Zhang, Liwo, Crivelli, and Taufer \\
	J.\ Comput.\ Chem.
	\end{minipage}
}

\vspace*{0.1in}   
	\begin{center}
		\includegraphics[width=0.9\columnwidth,keepaspectratio=true]{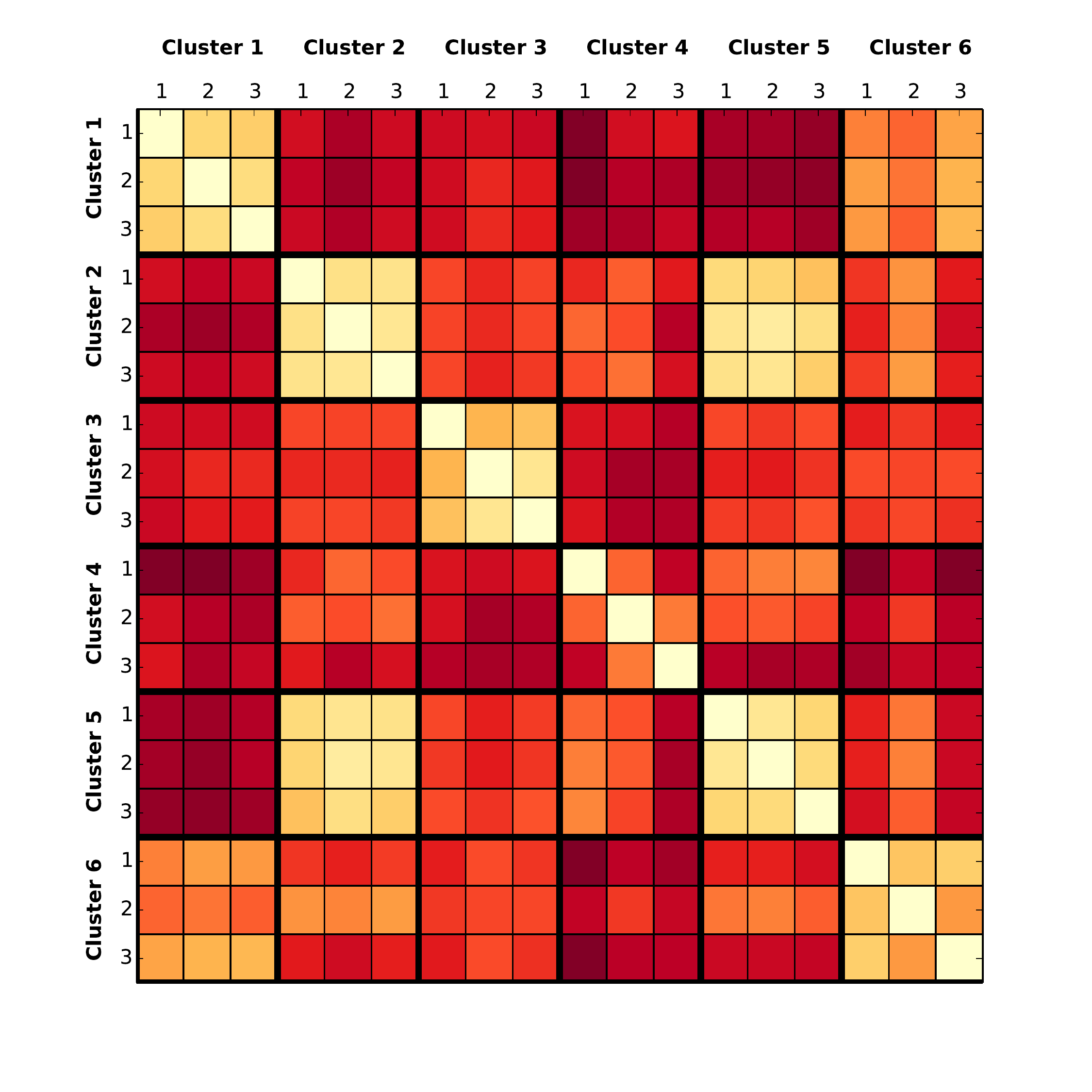}
	\end{center}
\vspace{0.25in}
\hspace*{3in}
{\Large
	\begin{minipage}[t]{3in}
	\baselineskip = .5\baselineskip
	Figure \ref{sheet1_heatmap} \\
	Johnston, Zhang, Liwo, Crivelli, and Taufer \\
	J.\ Comput.\ Chem.
	\end{minipage}
}

\vspace*{0.1in}   
	\begin{center}
		\includegraphics[width=0.9\columnwidth,keepaspectratio=true]{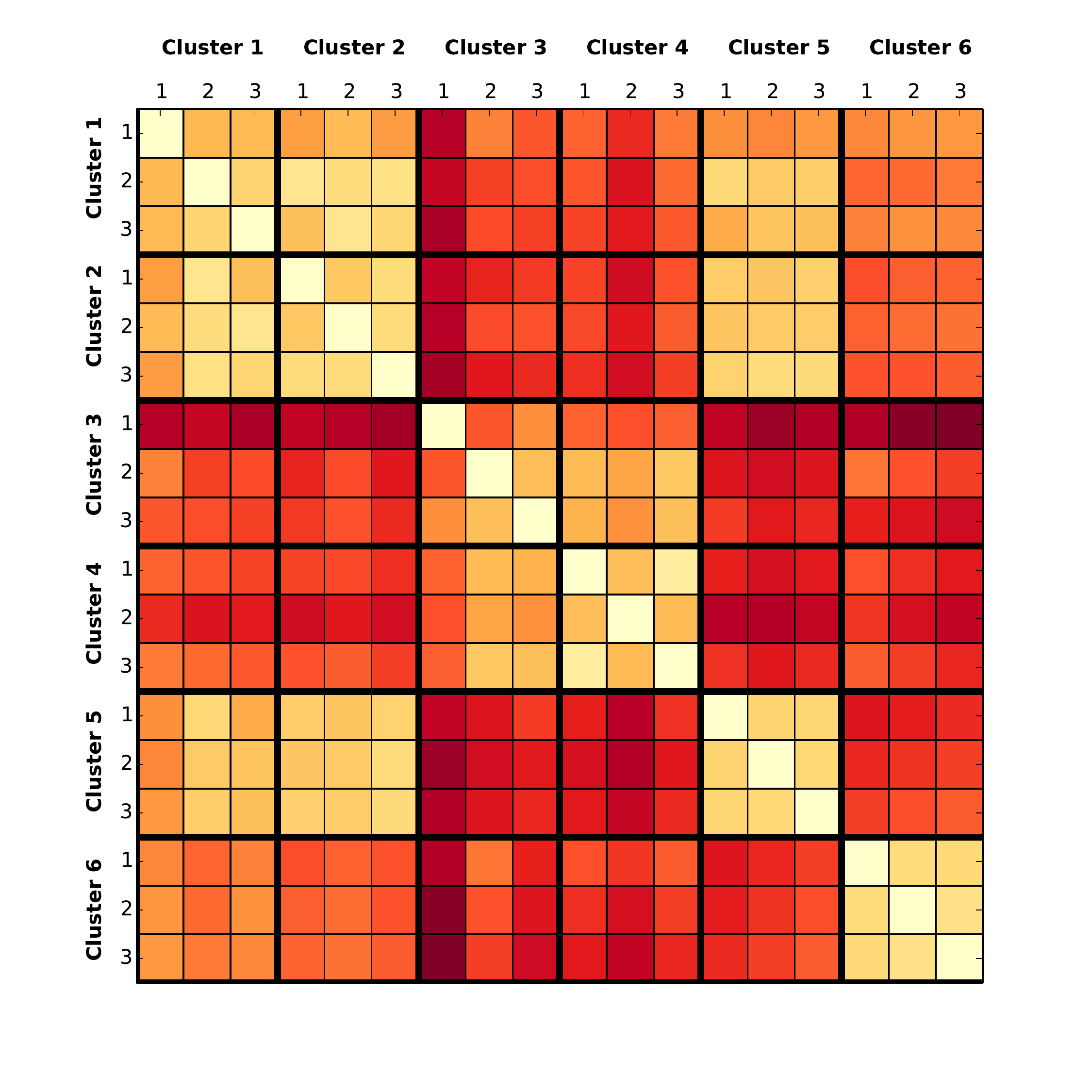}
	\end{center}
\vspace{0.25in}
\hspace*{3in}
{\Large
	\begin{minipage}[t]{3in}
	\baselineskip = .5\baselineskip
	Figure \ref{sheet2_heatmap} \\
	Johnston, Zhang, Liwo, Crivelli, and Taufer \\
	J.\ Comput.\ Chem.
	\end{minipage}
}

\vspace*{0.1in}   
	\begin{center}
		\includegraphics[width=0.9\columnwidth,keepaspectratio=true]{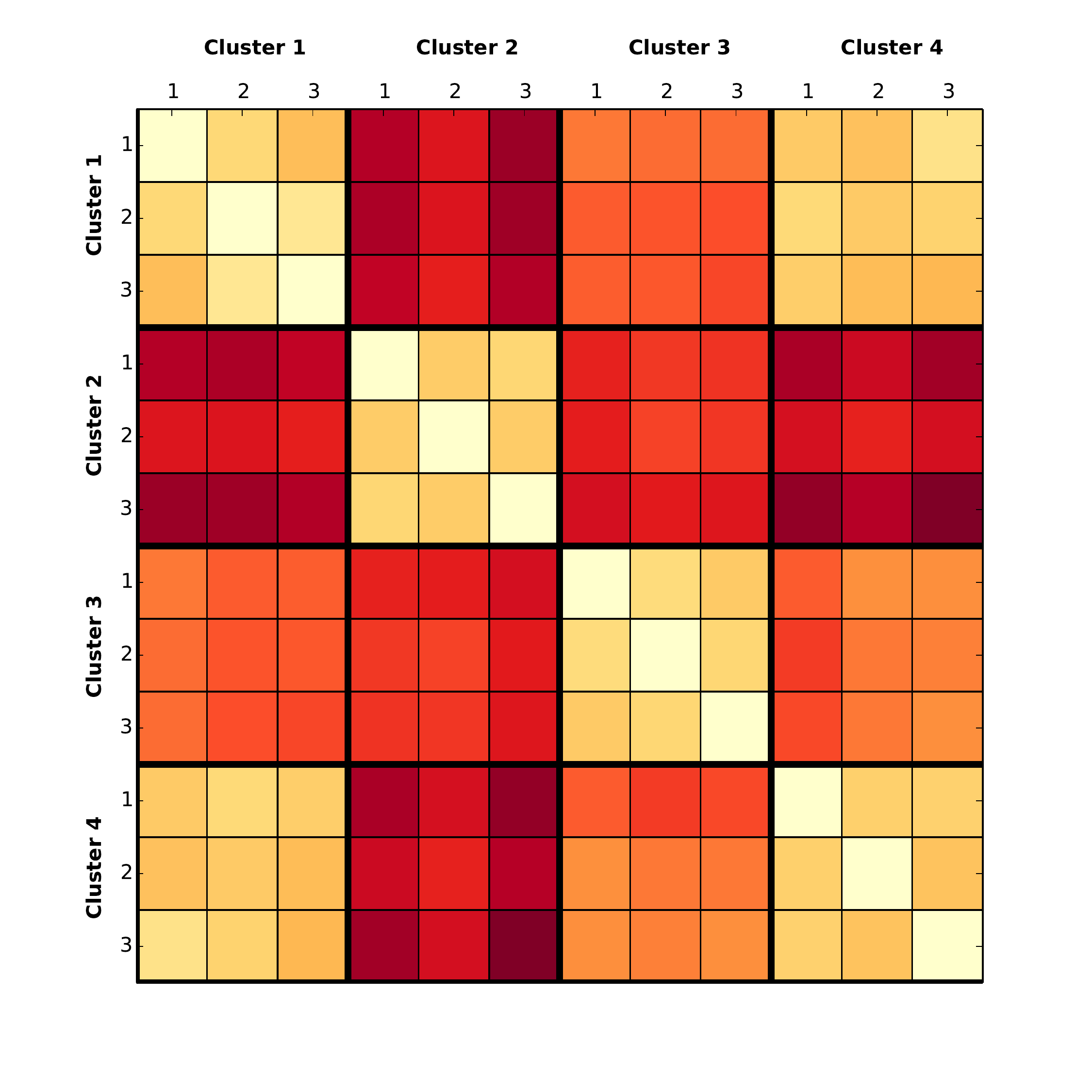}
	\end{center}
\vspace{0.25in}
\hspace*{3in}
{\Large
	\begin{minipage}[t]{3in}
	\baselineskip = .5\baselineskip
	Figure \ref{sheet3_heatmap} \\
	Johnston, Zhang, Liwo, Crivelli, and Taufer \\
	J.\ Comput.\ Chem.
	\end{minipage}
}

\clearpage

\vspace*{0.1in}
	\begin{minipage}[t]{\textwidth}
	\centering
		\includegraphics[width=0.75\columnwidth,keepaspectratio=true]{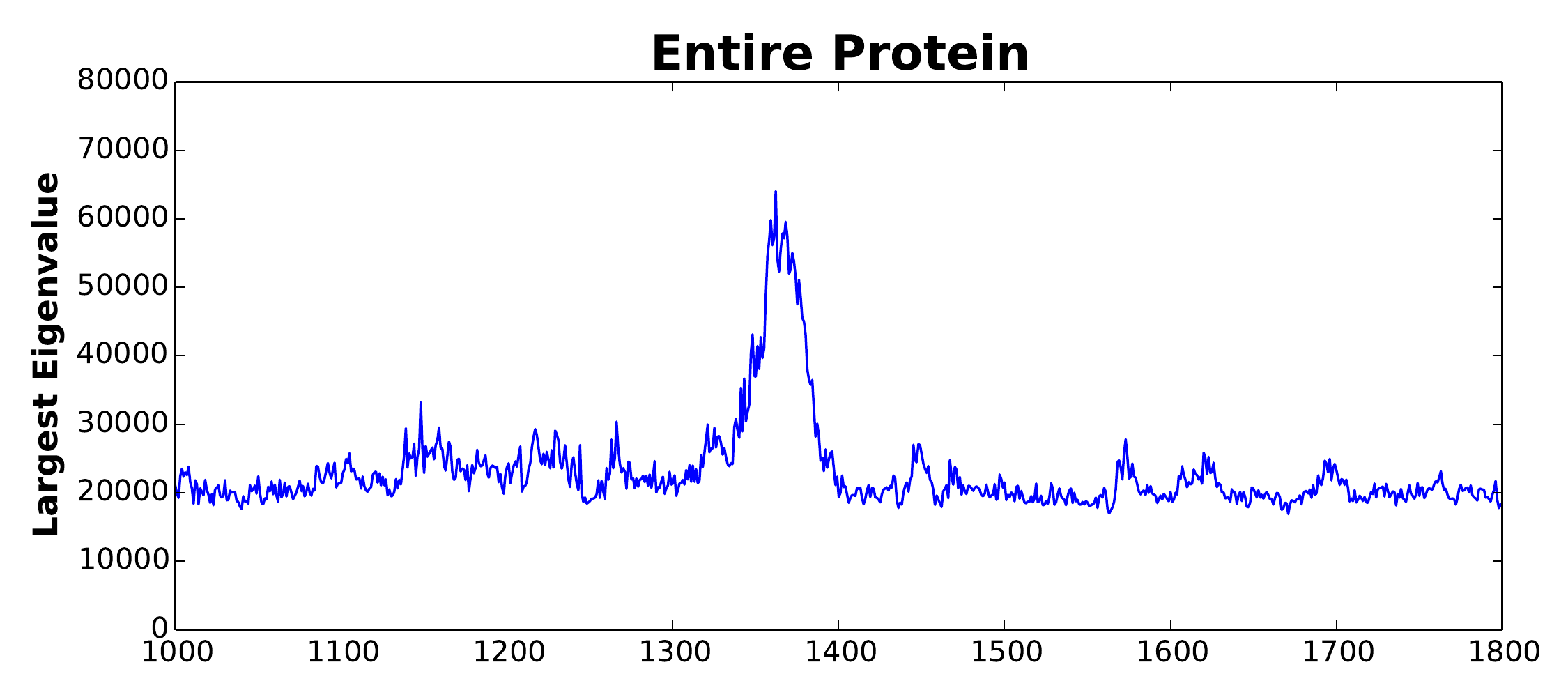}\hspace{.5 in}

		(a)
	\end{minipage}
	\begin{minipage}[t]{\textwidth}
	\centering
		\includegraphics[width=0.7\columnwidth,keepaspectratio=true]{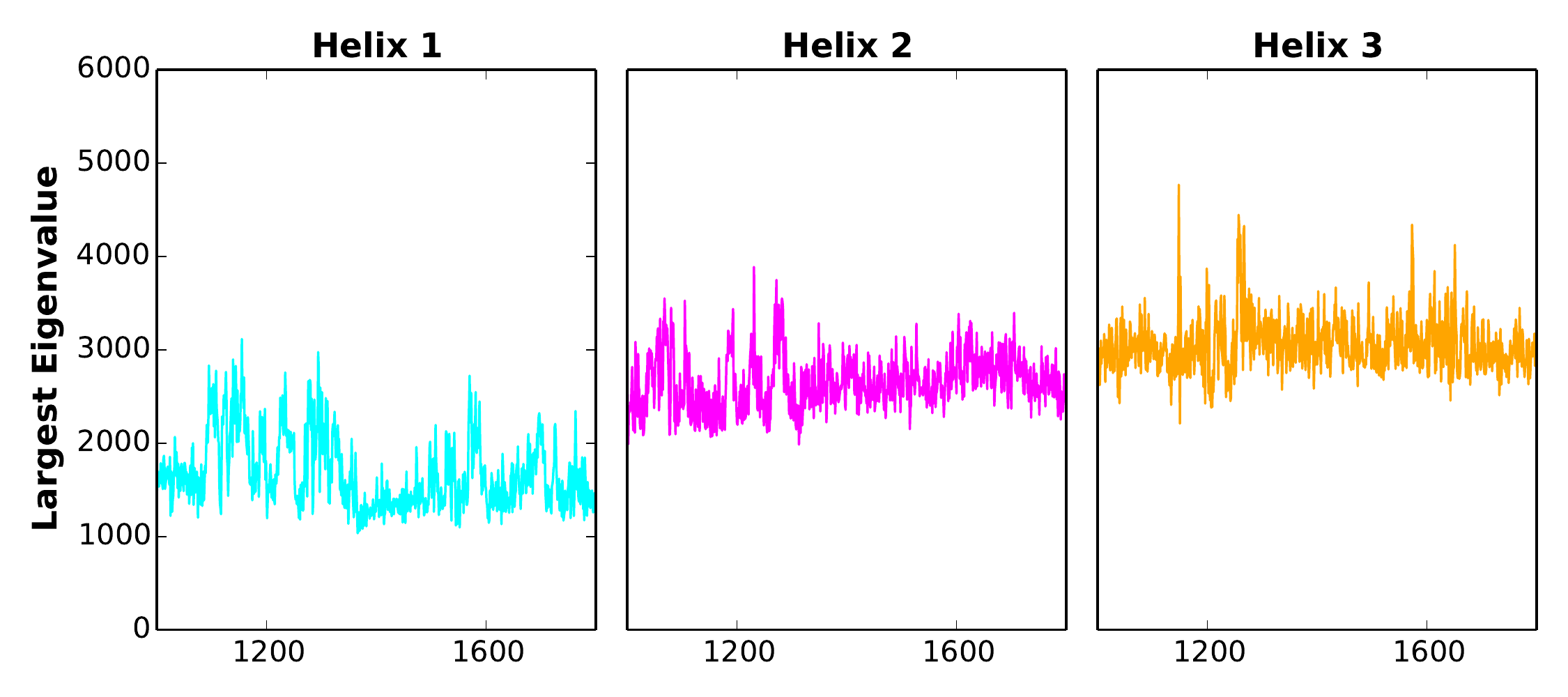}

		(b)
	\end{minipage}
	\begin{minipage}[t]{\textwidth}
	\centering
		\includegraphics[width=0.7\columnwidth,keepaspectratio=true]{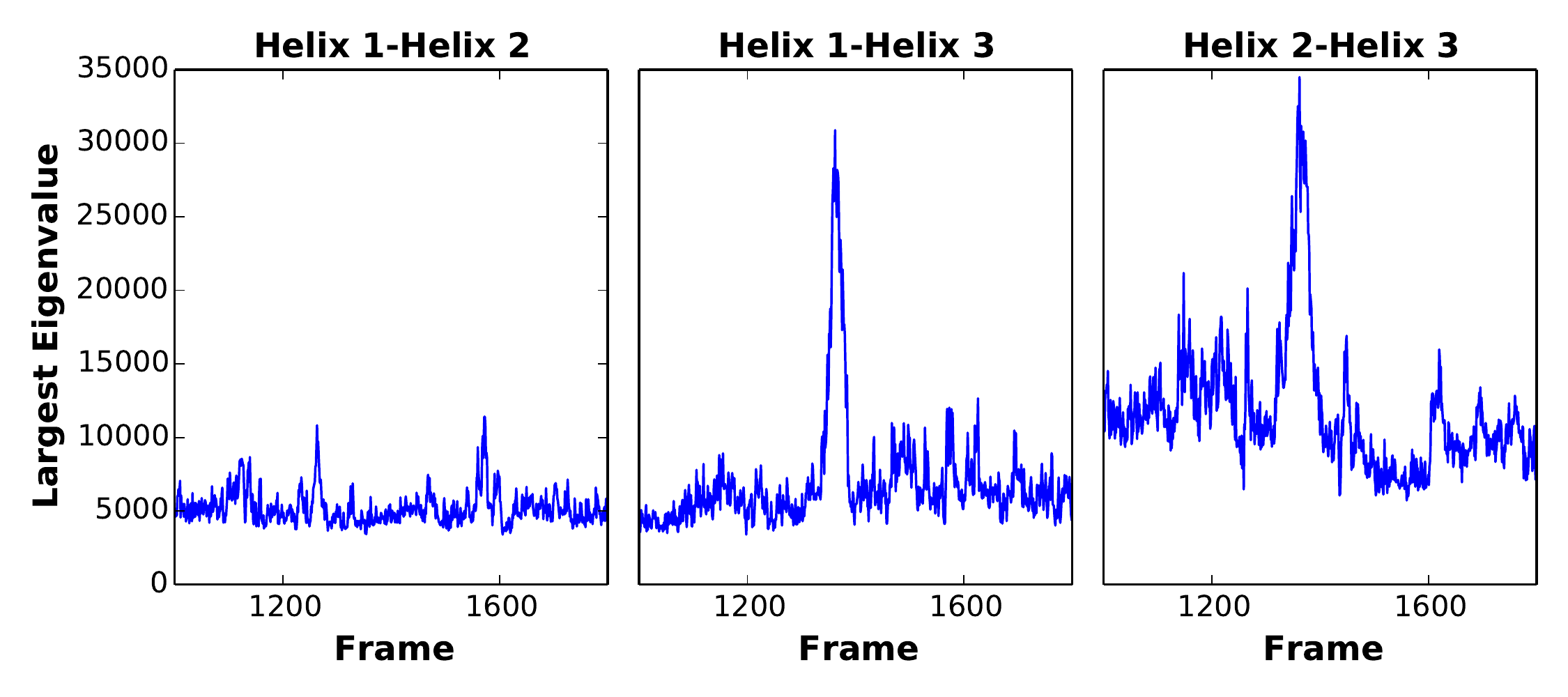}

		(c)
	\end{minipage}
\vspace{0.25in}
\hspace*{3in}
{\Large
	\begin{minipage}[t]{3in}
	\baselineskip = .5\baselineskip
	Figure \ref{ev_helix3_moving} \\
	Johnston, Zhang, Liwo, Crivelli, and Taufer \\
	J.\ Comput.\ Chem.
	\end{minipage}
}

\clearpage

\vspace*{0.1in}   
	\begin{center}
		\includegraphics[width=0.8\columnwidth,keepaspectratio=true]{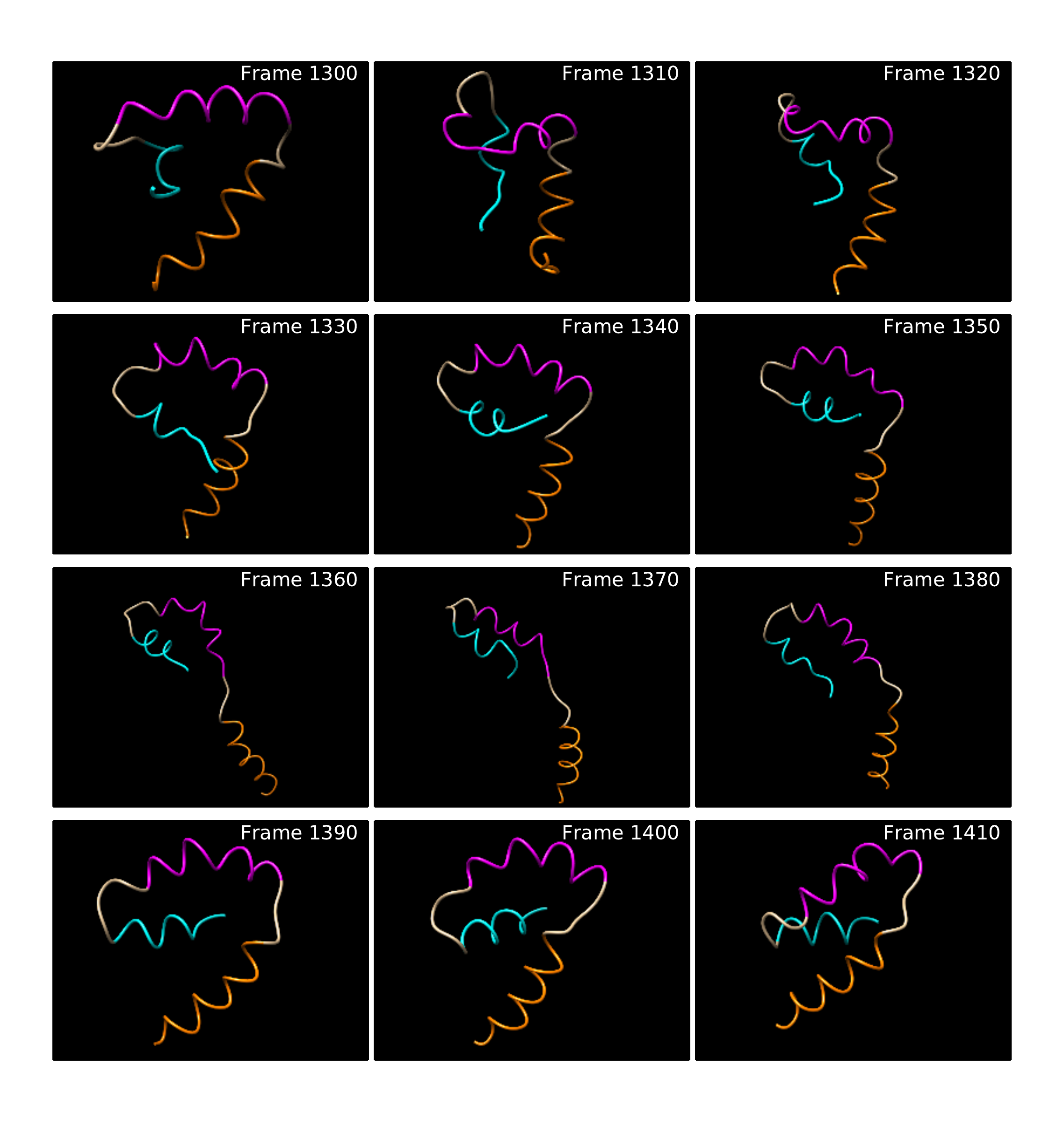}
	\end{center}
\vspace{0.25in}
\hspace*{3in}
{\Large
	\begin{minipage}[t]{3in}
	\baselineskip = .5\baselineskip
	Figure \ref{helix3_moving} \\
	Johnston, Zhang, Liwo, Crivelli, and Taufer \\
	J.\ Comput.\ Chem.
	\end{minipage}
}

\end{document}